\newcommand{\Hl}{\mathcal H}
\newtheorem{lemma}{Lemma}
\newtheorem{corollary}{Corollary}
\newcommand{\polylog}{\operatorname{polylog}}
\title{Prog-QAOA: Framework for resource-efficient quantum optimization through classical programs}
\author[1,2]{Bence Bak\'o}
\author[3,4]{Adam Glos}
\email{adamglos92@gmail.com}
\author[3,4]{\"Ozlem Salehi}
\author[1,2,4, 5]{Zolt\'an Zimbor\'as}
\affil[1]{Wigner Research Centre for Physics, Budapest, Hungary}
\affil[2]{Eötvös University,  Budapest, Hungary}
\affil[3]{Institute of Theoretical and Applied Informatics, Polish Academy of
	Sciences}
\affil[4]{Algorithmiq Ltd, Kanavakatu 3C 00160 Helsinki, Finland}
\affil[5]{QTF Centre of Excellence, Department of Physics, University of Helsinki, Helsinki, Finland}
\date{}
\DeclareMathOperator*{\poly}{poly}
\newcommand{\RR}{\mathbb{R}}
\newcommand{\tildorder}[1]{\tilde{\mathcal O}(#1)}
\newtheorem{theorem}{Theorem}
\begin{document}

\maketitle
\begin{abstract}
Current state-of-the-art quantum optimization algorithms require representing the original problem as a binary optimization problem, which is then converted into an equivalent cost Hamiltonian suitable for the quantum device. Implementing each term of the cost Hamiltonian separately often results in high redundancy, significantly increasing the resources required. Instead, we propose to design classical programs for computing the objective function and certifying the constraints, and later compile them to quantum circuits, eliminating the reliance on the binary optimization problem representation. This results in a new variant of the Quantum Approximate Optimization Algorithm (QAOA), which we name the Program-based QAOA (Prog-QAOA). We exploit this idea for optimization tasks like the Travelling Salesman Problem and Max-$K$-Cut and obtain circuits that are near-optimal with respect to all relevant cost measures, e.g., number of qubits, gates, and circuit depth. While we demonstrate the power of Prog-QAOA only for a particular set of paradigmatic problems, our approach is conveniently applicable to generic optimization problems.
\end{abstract} 

\section{Introduction}


Noise-robust quantum hardware holds the promise of solving many problems much faster than classical computers. Among the best examples, one could mention Grover's search providing a quadratic speed-up over classical brute-force search~\cite{grover1996fast} or Shor's algorithm for solving the factorization problem in polynomial time~\cite{shor1999polynomial}. However, quantum devices that could run these demanding quantum algorithms for useful instances do not exist yet; instead, we have only access to small quantum computers with a limited coherence time. This started the Noisy Intermediate-Scale Quantum era~\cite{preskill2018quantum}, in which a huge effort is underway in order to use these devices for real or realistic problems, which include optimization \cite{ peruzzo2014variational, albash2018adiabatic, farhi2014quantum}, machine learning~\cite{biamonte2017quantum}, and quantum simulation~\cite{yuan2019theory}. In particular, optimization algorithms are expected to be among the first to provide quantum advantage. For combinatorial optimization problems, the two leading algorithms are Quantum Annealing~\cite{albash2018adiabatic} and Quantum Approximate Optimization Algorithm (QAOA)~\cite{farhi2014quantum}, both motivated by the quantum adiabatic theorem~\cite{born1928beweis}.

Both Quantum Annealing and QAOA use the dynamics obtained from two non-commuting Hamiltonians, a cost Hamiltonian, and a mixer Hamiltonian. The former Hamiltonian describes the original optimization problem to be solved~\cite{lucas2014ising}, while the latter is responsible for the amplitude transition between the solutions to the problem. In Quantum Annealing, one starts with the ground state of the mixer Hamiltonian and evolves it slowly according to a time-dependent Hamiltonian interpolating between the mixer and the cost Hamiltonians. The initial quantum state is transformed into the ground state of the cost Hamiltonian, provided that the annealing time is sufficiently large.

QAOA is an alternative to Quantum Annealing designed for the quantum gate-based model~\cite{farhi2014quantum}. The algorithm is inspired by the trotterizaton of the adiabatic evolution but goes beyond that by adjusting the evolution time of the alternatingly applied mixer and cost Hamiltonians. Since the trotterized adiabatic evolution is a special case of this scheme, one can guarantee that the ground state is reachable for sufficiently many alternating evolution steps. Recent results using 289-qubit quantum hardware showed an improvement of QAOA over simulated annealing~\cite{ebadi2022quantum}. Further experimental results on QAOA are demonstrated in~\cite{PhysRevX.13.041052,shaydulin2023evidence}. Since sampling from the QAOA's
quantum circuit has been proven to be computationally hard for classical
computers~\cite{farhi2016quantum}, the scheme is also a candidate for near-term quantum computational advantage experiments. 
Finally, Boulebnane and Montanaro recently obtained analytical bounds on the average success probability of QAOA for $K$-SAT problems and showed that it has a potential for quantum speed-up~\cite{boulebnane2022solving}.

Implementation of QAOA involves constructing a circuit consisting of alternating layers of parametrized phase and mixer operators corresponding to cost and mixer Hamiltonians, respectively. The evolution times of the operators are determined by the parameters of the circuit, which are optimized by a classical algorithm to minimize the cost. Traditionally, the mixer operator consists of single-qubit $X$ rotations, but alternative mixer operators have been explored for different problem classes in later works \cite{hadfield2019quantum, bartschi2020grover, wang2020x}. On the other hand, the phase operator is problem-specific, and the current state-of-the-art way of implementation requires obtaining the dedicated model representation of the optimization problem. This involves mathematically expressing the problem and then formulating it as a Quadratic Unconstrained Binary Optimization (QUBO)~\cite{lucas2014ising}, or a Higher-Order Binary Optimization (HOBO) problem~\cite{glos2020space,tabi2020quantum}. The binary model is then transformed into an Ising model by replacing each bit variable $b_i$ with a spin variable $s_i$ by substituting $b_i=\frac{1-s_i}{2}$. Finally, the phase operator is obtained by term-wise implementation of the cost Hamiltonian through gates generated by $ZZ$ terms. One drawback of this process is that it is not always straightforward to come up with the Ising model representation of the problem at hand. A more significant downside of this approach is that it can result in circuits with an excessive number of gates. We can readily see this by a simple example: the HOBO formulation $-\prod_{i=0}^{n-1}b_i$ results in the Ising model $-\frac{1}{2^n}\prod_{i=0}^{n-1}(1-s_i)$ that has an exponential number of terms, and thus gates, as each term is implemented independently.

Taking into account the aforementioned drawbacks of the state-of-the-art method, in this paper, we provide an alternative approach for implementing QAOA. Our proposed approach, called Program-based QAOA (Prog-QAOA), deviates from the conventional term-wise implementation of the phase operator derived from the Ising model representation of the problem. Instead, we offer a novel methodology that eliminates the need for such representation. Starting with an abstract description of the problem, we first identify the objective and the constraints that need to be satisfied. For the objective, we design a classical program that computes the objective value, while for each constraint, we design a program that outputs 0 if the constraint is satisfied and a positive number otherwise. Subsequently, the programs are compiled into quantum circuits and combined with a rotation around the $ Z $ axis to achieve an equivalent effect to term-by-term implementation. The Prog-QAOA framework allows for a direct mapping from abstract problem descriptions to quantum circuits, eliminating the reliance on the QUBO/HOBO or Ising model representation. With this new approach, it is sufficient that the programs are efficiently implementable on quantum hardware, which also increases the pool of constraints that can be efficiently implemented. 

The concept of Prog-QAOA has not been explored before as a general QAOA framework, up to our knowledge. An implicit example appears in \cite{fuchs2021efficient} for the weighted Max-$K$-Cut problem, where the authors propose a quantum circuit that directly implements the problem Hamiltonian into the circuit. Notwithstanding, we will show that a more efficient implementation is possible for Max-$K$-Cut later in the paper with simpler reasoning on how the circuit can be constructed. In \cite{de2019knapsack}, the authors solve a battery optimization problem that is equivalent to the Knapsack problem using QAOA, where they compute the penalties resulting from inequalities inside the circuit instead of using the conventional penalty method.  However, this is demonstrated only in the case of linear inequalities that exist in the specific problem considered. Recently, in~\cite{boulebnane2023peptide}, it is suggested that the phase operator for a problem from quantum chemistry can be constructed by implementing arithmetics operations within a quantum circuit.

We claim that Prog-QAOA not only facilitates designing quantum circuits compared to the state-of-the-art implementation but also enables the design of resource-efficient quantum circuits with respect to cost measures such as number of qubits, number of gates, depth both on all-to-all and linear nearest neighbor architectures, number of shots required for energy estimation, and the search space size. All of those factors have an essential impact on the quality of the solution and optimization time. We demonstrate the power of Prog-QAOA through paradigmatic problems like the Travelling Salesman Problem (TSP) and Max-$K$-Cut. Our designs not only improve the known alternatives but also reach the lower bounds simultaneously for all cost measures up to a polylogarithmic factor, resulting in near-optimal circuit designs. In particular, for TSP, we go below the so far unsurpassed bound $\order{ n^3}$ for the number of gates, obtaining at the same time $\order{ n \log n}$ number of qubits and $\tildorder{n}$ depth. In addition to a detailed analysis of the aforementioned problems, we demonstrate the Prog-QAOA approach for TSP with time windows, general Integer Linear Program, and special cases of graph isomorphism. In general, any problem in NP has a classical polynomial algorithm for certifying the correctness of its solutions which can serve as the starting point for the classical program. 

The rest of the paper is organized as follows. In Sec.~\ref{sec:prog-qaoa}, we introduce the Prog-QAOA framework. In Sec.~\ref{sec:applications}, we demonstrate Prog-QAOA for Max-$K$-Cut and TSP problems followed by some technical details of implementation in Sec.~\ref{sec:technical}. In Sec.~\ref{sec:cost}, we analyze the presented algorithms taking into account various cost metrics and we perform a numerical investigation on the performance of Prog-QAOA. Finally, in Sec.~\ref{sec:discussion} we discuss our results and present conclusions with possible future investigation directions.

\section{Prog-QAOA} \label{sec:prog-qaoa}
In every optimization problem, we start with an abstract non-mathematical description. For example, the Travelling Salesman Problem (TSP) can be stated as finding the shortest closed route that visits all cities, given the travel costs between them. Starting with this initial description, a particular approach to tackle such problems involves constructing a mathematical model consisting of an objective function $ f(x) $ and usually a set of constraints to be met. In the case of TSP, the objective function is the total cost of the route, while the constraints restrict solutions to permutations of cities only. In essence, any optimization problem can be mathematically represented as a combination of an objective function and a set of constraints. Although commonly encountered constraints are in the form of (linear) equalities or inequalities,  the form of constraints $C_i$ is not limited to those. In general, one can consider $C_i: x \mapsto \{F,T\}$ such that $C_i(x) = T$ iff $x$ is a solution that satisfies $C_i$. Note that infeasible solutions are not necessarily required to violate all the constraints simultaneously.

Whether classical or quantum, to utilize generic algorithms like the simplex method, quantum annealing, or QAOA,  it is typically necessary to provide an explicit formulation tailored to the chosen method. For example, the simplex method requires a representation that consists of a linear objective function and linear equality or inequality constraints. For the quantum optimization framework, the current requirements necessitate expressing the problem in the following form:
\begin{equation}
	h(s) = Af(s) + A_1 f_{C_1}(s) + \dots + A_m f_{C_m}(s), 
\end{equation}
where $f$ is the objective function and for a given $s$ where $s$ is a vector of spin variables, $f_{C_i}(
s) \geq 0 $ are \textit{constraint functions} such that $f_{C_i}(s)=0$ if and only if $C_i$  is satisfied. $A$ is taken to be 1 in general, while $A_i$ are sufficiently large constants. For QAOA and quantum annealing, the function has to be polynomial with a polynomial number of terms, which might also be the case for unbounded degree polynomials~\cite{glos2020space}. In the case of existing quantum annealers, this function is required to be a quadratic polynomial. In both cases, the function from the equation above contributes to the quantum state construction by adding a local phase to the respective solution, i.e. $h:\ket{s} \to \exp\left (i t h(s) \right) \ket{s}$. Following the intuition from quantum annealing, which is an appropriate analogy for time evolutions with small $t$, the smaller the values of $s$ the higher the value of the amplitude corresponding to $\ket{s}$ is expected in the final quantum state. It is worth noting that there are methods that encode constraints via proper initial state preparation and/or choice of the mixer Hamiltonian, however, even for the relatively simple constraints, the cost of implementing initial states or mixer might be prohibitive~\cite{bartschi2020grover,wang2020x,hadfield2019quantum}.
 
For QAOA, we can leverage the fact that the quantum gate-based model is computationally universal and not bound to a particular layout. Therefore, instead of explicitly having $f_{C_i}$ as a polynomial, we can design a classical program that outputs 0 if a given solution satisfies constraint $C_i$ and a positive value otherwise. Such a program is then compiled into a quantum circuit, which when combined with a rotation around the $Z$ axis and uncomputation, achieves an equivalent effect to term-by-term implementation. Similarly for the objective function $ f $, we can design a classical program that computes the objective value instead of having $ f $ as a polynomial. The same program can also be used to compute the energy of the sample.

To better illustrate this approach, let us consider a constraint $C$ asserting $b_1,\dots, b_k$ form a valid one-hot vector, i.e., exactly one bit equals $1$. For simplicity, we will use bits instead of spins for convenience, but the same conclusions also apply to the latter. Such constraint is usually modeled through a penalty function
\begin{equation}
	f_C(b_1,\dots,b_k) \coloneqq \left ( \sum_{i=1}^k b_i - 1 \right )^2.
\end{equation}
The corresponding Ising model is a quadratic polynomial with $\order{k^2}$ 2-local terms and requires $\order{k^2}$ gates to be implemented on the quantum hardware. 

However, it is possible to design a very simple circuit that enforces the same constraint $C$. We can achieve this by computing the sum $\sum_{i=1}^k b_i$ on a separate register using $\sim \log_2(k)$ qubits and then verifying using a multi-controlled NOT gate whether the sum equals 1 and recording this on a separate qubit called \textit{flag}. Once the flag is set, we apply rotation in the $Z$ basis proportional to the trained parameter in QAOA. Both the program and the resulting quantum circuit can be found in Fig.~\ref{fig:one-hot}. Even with a basic implementation of adders~\cite{ruiz2017quantum}, and a usually costly no-ancilla MCT implementation~\cite{DaftWullie2022stack} for checking the sum state, this alternative circuit requires no more than $\tildorder{k}$ gates. Notably, this number is smaller than the previously presented term-by-term implementation.

The corresponding new constraint function $f'_C$ differs from $f_C$ and can be explicitly written as  
\begin{equation}
	f'_C(b_1,\dots,b_k) =  \left [ \sum_{i=1}^k b_i \neq 1   \right ],
\end{equation}
where $[\psi]$ is the Iverson notation that evaluates to 1 if $\psi$ is satisfied and 0 otherwise. Note that the explicit form of $f'_C$ nor the corresponding Ising model is not needed to implement the circuit given in Fig.~\ref{fig:one-hot}. Yet, one can show that the corresponding Ising model has exponentially many terms, which with the state-of-the-art approach, will also result in exponentially many gates, see Appendix Sec.~\ref{sec:exp-terms}. As the representation as a pseudo-Boolean polynomial (in this case also Boolean polynomial) is unique, one cannot hope for finding a different polynomial for $f'_C$ with a smaller number of terms. 

One could claim that the above constraint leads to a flat energy landscape. Indeed, one can show that for any valid or invalid solutions, we have $f'_C(b) \in \{0,1\}$, while for the state-of-the-art method, the energy span $f_C(b) \in \{0, 1, 4, \dots, (k-1)^2\}$. While, as we will show later, the former is much more efficient in terms of the number of measurement outcomes required, the latter indicates the extent of deviation from the desired outcomes. Nevertheless, the same can be accomplished by a small update to our program presented in Fig.~\ref{fig:one-hot}. Since in our program, we are computing the sum of all bits \emph{sum}, one can further compute $\text{sum} \mapsto (\text{sum}-1)^2$ using basic operations. Then instead of applying $ Z $ rotation on the flag which is not needed anymore, we can just rotate it as $ \ket{\text{sum}} \mapsto \exp(i \cdot \text{sum}) \ket{\text{sum}}  $ and uncompute to implement exactly the same constraint function $f_C$.  Since $sum$ is defined over $\log(k)$ bits, this will not require more than $\order{\polylog(k)}$ additional operations on quantum hardware. Hence, one can implement $ f_C $ with $\tildorder{k}$ gates. 

Now, we can draw several important conclusions. Firstly, through this toy example, we demonstrated the superiority of our approach over the state-of-the-art implementation in terms of the number of gates required to verify the validity of one-hot encoding. Furthermore, we showed that the pool of constraint functions $f_C$ suitable for efficient implementation on quantum hardware can be expanded. We lift the requirement that the constraints must be represented as pseudo-Boolean polynomials with a polynomial number of terms. We now only need the function to be efficiently implementable on quantum hardware, which is at the same time necessary requirement for the NISQ era. This newfound freedom enables us to choose functions that may offer a better landscape than the ones in consideration. Finally, the overall implementation can be improved by optimizing parts of the circuit that are easy to interpret (like arithmetic operations), taking into account the limitations of the quantum hardware, such as limited connectivity or the ability to apply gates in parallel.

It is worth mentioning that increasing the pool of constraint functions and reducing the number of gates are not the only possible advantages of the proposed approach. Another possible benefit of the program-based approach is the potential for reducing the search space. To illustrate this, let us consider an inequality constraint of the form $|g(x)| \geq k$ for some fixed constant $k>0$ and polynomial function $ g $. The state-of-the-art method is to convert it into two constraints of the form
\begin{gather}
g(x)  + Ny \geq k,\qquad
-g(x)  + N(1-y) \geq k,
\end{gather}
where $y$ is a new binary variable to be optimized, and $N$ is a sufficiently large constant. Then the inequalities are converted into equalities using slack variables, which are then transformed into soft constraints for QUBO, using the penalty method. This results in new binary variables to be optimized for each such constraint, increasing the search space.

Contrary, using the program approach, it is enough to compute $z \coloneqq g(x)$ efficiently (note that previously we had to assume that $g(x)$ is polynomial, which is not needed here). Then, if $z$ is stored in a usual binary format for a signed integer, we can use an implementation like the one described in~\cite{orts2023efficient} to compute $|z|$. Afterward, it is sufficient to verify $|z| \geq k$, which can be accomplished by checking if $|z|-k \geq0$, which is again simple using signed integer representation. All (except for possibly computing $g$) operations involved consist of only basic arithmetic operations, yet allow applying a generic constraint without increasing the search space.

Keeping in mind the above examples, we can outline the following scheme for designing a Program-based QAOA (Prog-QAOA) ansatz. Visualization can be found in Fig.~\ref{fig:diagram}.
\begin{enumerate}
	\item \textit{Problem description}: Provide a precise description of the problem, including the variables, objective value, and constraints.
	\item \textit{Program preparation}: For the objective function $f$, prepare a program that computes the objective value. For each constraint $C_i$, prepare a program that outputs 0 if the constraint is satisfied and a positive number otherwise. Note that the explicit form of the corresponding constraint function $ f_{C_i} $ is not necessary.
	\item \textit{Circuit implementation}: Implement each program one by one as quantum circuits as follows. Start by constructing the quantum circuit that produces the desired outcome of the program on a register $\ket{\rm out}$. Then apply the operation $\ket{\rm out} \mapsto \exp(iA\theta)\ket{\rm out}$ in the computational basis, like e.g. in \cite{gilliam2021grover}, where $A$ is the corresponding penalty parameter for the program and $\theta$ is the optimized parameter. Finally, implement the inverse of the quantum circuit representing the program.
	\item\textit{Mixer and initial state implementation}: Combine the constructed circuits with appropriate input and mixer circuits.  
\end{enumerate}

\begin{figure}[t]
    \centering
    \includegraphics[]{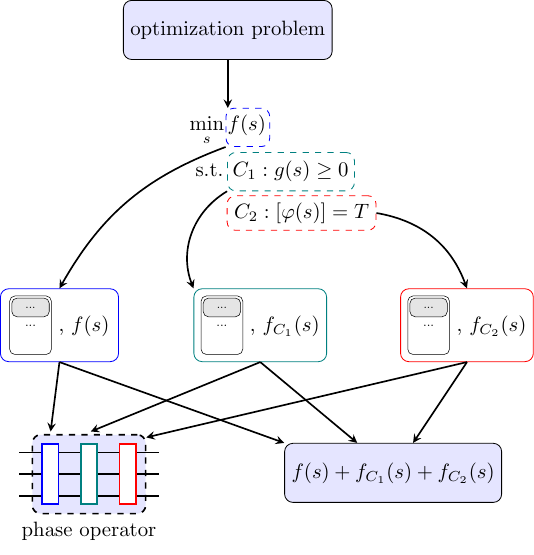}
    \caption{A high-level diagram of Prog-QAOA. For simplicity, penalty terms and parameters $\theta$ for ansatz optimization are omitted.}
    \label{fig:diagram}
\end{figure}

\begin{figure}[t]
\begin{algorithm}[H]
\caption{Program for verifying if bits form a one-hot vector}
		\begin{algorithmic}[]
		\REQUIRE $(b_{0},\dots,b_{k-1})$ -- bits
		\STATE $\text{sum} \gets 0$
		\FOR{$i=0$ to $k-1$ }
		\STATE $\text{sum} \gets \text{sum} + b_{i}$
		\ENDFOR 
		\STATE flag $\gets 0$
		\IF{$\text{sum} \neq 1$}
		\STATE flag $\gets 1$
		\ENDIF
		\RETURN flag
	\end{algorithmic}
\end{algorithm}

\centering
\includegraphics{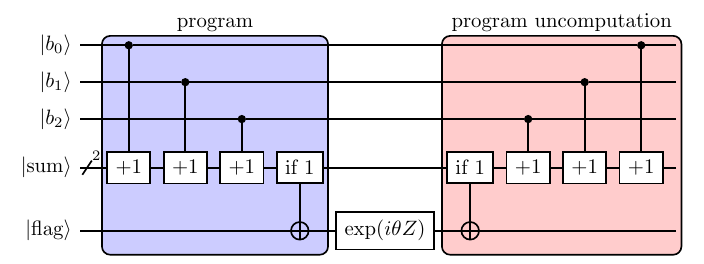}
 \caption{Program for checking if bits form a one-hot vector (top) and the corresponding quantum circuit for $k=3$ (bottom).\label{fig:one-hot}}
\end{figure}

By following this scheme, we create a Prog-QAOA that incorporates both the objective value and constraints into the quantum circuit. One may choose a more advanced mixer and initial state as suggested in \cite{hadfield2019quantum,wang2020x,bartschi2020grover,sawaya2022encoding}. By doing so, it is possible to reduce the set of constraints and further decrease the computational resources, provided that the initial state and mixer are themselves reasonable to implement.

One may wonder how to transform rather abstract programs into a low-level quantum circuit representation. First, since these programs are expected to consist of basic operations such as arithmetic operations and conditionals, we claim that this transformation even if inefficiently should be straightforward to implement by hand using known implementations of these operations. However, one can use a more generic approach by implementing the classical program, compiling it into machine-level language acting on bits, and turning this into a reversible classical circuit which then can be straightforwardly transformed into a quantum circuit. We refer the reader to the relevant literature for this process~\cite{miller2009synthesizing, saeedi2013synthesis}. Which approach would turn out to be more beneficial may depend on the particular program to be implemented.

\section{Applications of Prog-QAOA}\label{sec:applications}

In this section, we show how one can apply Prog-QAOA to paradigmatic and important problems like the Max-$K$-Cut and the Travelling Salesman Problem. In particular, we show that the usual approaches for TSP were doomed to fail in achieving an optimal number of gates due to limitations inherent in the QUBO representation. Finally, we show that, for the two aforementioned problems, our implementations are near-optimal and outperform existing approaches despite the extensive literature dedicated to these problems.

\subsection{Max-$K$-Cut} 

Let $G=(V, E)$ be an undirected complete graph with the weights $W:E\to \RR_{\geq 0}$ and $K>2$ allowed colors. For convenience we will use notation $W_{i,j}$ instead of $W(\{i,j\})$. The goal of Max-$K$-Cut is to color the nodes of $G$ to maximize the total weight of the edges connecting nodes with different colors~\cite{tabi2020quantum}. Alternatively, one may consider minimizing the total weight of the edges connecting nodes with the same colors. Note that there is an associated decision problem called Graph Coloring where the question is `Is there a node coloring such that \emph{all} edges connect nodes with different colors?', and it is defined for unweighted graphs. While these problems are different in nature, they can be addressed in a similar manner.

Let us assume that we are given a list of integers $ c_i $, denoting the color of node $ i $. By looking at the abstract problem definition, we see that we need to calculate the total weight of the edges connecting the nodes with the same colors, which will serve as our objective to minimize. 
An intuitive program for computing the objective value can be found in Algorithm~\ref{alg:maxkcut_control}. To implement the algorithm on a quantum device, each integer is mapped to a quantum register. In our algorithms, we will represent integers using binary encoding. Thus, $\lceil \log K \rceil$ qubits are required for each register. For each pair of different integers $(c_i,c_j)$, we check if they encode the same color and store this in an additional (qu)bit variable `flag' (implementation details of each basic operation can be found in Appendix Sec.~\ref{sec:implementation-details}). Based on the `flag', the value $W_{i,j}$ is added to the `sum' register, initially set to 0. Note that, in general, the addition must be performed using floating-point arithmetics. 

Note that the variables $i$ and $j$ in Algorithm~\ref{alg:maxkcut_control} are at each step in a computational basis state. Hence, any operation performed based on their values can be directly implemented within the quantum circuit, eliminating the need for storing them in separate quantum registers and performing classically-conditioned operations based on them. This allows us not only to free some qubits but avoids the need of storing $W_{i,j}$ say in a QRAM, which would be necessary in case $i$ and $j$ were stored in quantum registers.

\begin{algorithm}[t!]
	\caption{Program for calculating the objective in Max-$K$-Cut problem}\label{alg:maxkcut_control}
	\begin{algorithmic}[]
		\REQUIRE $(c_1,\dots,c_n)$ -- the list of colors associated to each node, $W$ -- weights of the graph
        \STATE $\text{sum} \gets 0$
		\FOR{$i=1$ to $n$ }
		\FOR{$j$ in $[1,2,\dots,i-1,i+1,\dots n]$ }
        \STATE $\text{flag} \gets c_i=c_j$
        \IF{$\text{flag}$}
        \STATE $\text{sum} \gets \text{sum} + W(\{i,j\})$
		\ENDIF\ENDFOR \ENDFOR
        \RETURN sum
	\end{algorithmic}
\end{algorithm}

On the other hand, the `flag' register may be seen as unnecessary as the summation can be performed controlled on quantum registers corresponding to $c_i$ and $c_j$. While this would be possible, it would require performing the summation conditioned on multiple qubits, resulting in a more complicated circuit. Instead, by storing such a check separately, we can implement the summation controlled only on a single bit. This significantly reduces the number of gates due to a smaller number of controlled bits. Note that both scenarios can be precisely described with the presence of the variable `flag' in the program, showing the flexibility of our approach.

Since the problem is inherently unconstrained, providing a program only for the objective function is sufficient. However, a constraint may be needed to ensure that the colors belong to the set $\{0,1,\dots,K-1\}$, for instance, in the case when one starts in the uniform superposition of all bits. In such a case, one may penalize the invalid solutions by reusing the circuits for binary verification described in~\cite{botelho2022error} to penalize the invalid states (in such a case instead of measuring the qubits one can simply apply $Z$ rotations) or use the approach presented in which the values greater than $K-1$ would correspond to valid color say $K-1$~\cite{fuchs2021efficient}. In this paper, we overcome this problem by starting in an equal superposition of only valid states -- implementation details of the circuit in the case of binary encoding can be found in Appendix Sec.~\ref{sec:sim-maxkcut}. 

With this choice of the initial state, we choose the Grover mixer for the proposed Prog-QAOA~\cite{bartschi2020grover}. Note that alternatively, one can apply the Hamiltonian transitioning the amplitude only between consecutive colors or between all colors within a single register, which correspond to ring and complete $XY$-mixers working in the space of one-hot states~\cite{wang2020x}. In the first case, as the Hamiltonian is a circulant matrix, it is enough to implement the generalized Quantum Fourier Transform (QFT), the diagonal matrix consisting of eigenvalues of the circulant matrix, and the inverse of QFT~\cite{marsh2020combinatorial}. While we can use the same procedure for the complete $XY$-mixer, it is enough to just apply the Grover mixer on each quantum register in parallel. 

To show that the simplicity of Prog-QAOA does not come at the expense of increased resources, let us calculate the number of qubits and gates required for our approach. Note that a more comprehensive analysis will be done later after introducing advanced techniques for preparing more efficient circuits. We need $n \lceil log K \rceil=\order{n\log K}$ qubits to represent integers $c_i$, one qubit for the `flag' variable and $\order{ \log (n^2) } = \order{\log(n)}$ qubits for the `sum' (we assume that values $W_{i,j}$ are of $\order{1}$, so the maximal value of the `sum' is at most $n^2$, and we desire constant precision). Hence, we obtain a similar number of qubits as in the state-of-the-art works such as~\cite{fuchs2021efficient,tabi2020quantum}. Regarding the number of gates, we need to perform $\order{n^2}$ checks, with each requiring $\polylog(K)$ gates, and $\order{n^2}$ controlled additions, with each requiring $\order{\polylog(n)}$ gates. As a result, the total number of gates is $\order{n^2 \polylog(n,K)}$, which improves any implementation found in the existing literature~\cite{lucas2014ising,wang2020x,tabi2020quantum,fuchs2021efficient}. Later on, we will demonstrate how this number can be further reduced to $\order{n^2 \polylog(K)}$.

\subsection{Travelling Salesman Problem}

The goal of the Travelling Salesman Problem is to find the path with the minimum cost that passes through all cities and returns to the starting city on the complete graph $K_n=([n], E)$, where $[n]\coloneqq\{0,1\dots,n-1\}$. An instance of the TSP problem is defined by a cost function $W:E\to \mathbb R$. Similarly, as for Max-$K$-Cut, we will write $W_{i,j}$ instead of $W(\{i,j\})$.

Let $c_t$ be an integer denoting the city visited at time $t$ encoded in binary like it was for Max-$K$-Cut, and $\{c_t\}$ be the set of all such integers. Contrary to Max-$K$-Cut, for this problem, we also need to include a constraint $C$ ensuring that each city is visited only once. Let us start with implementing the constraint. We will use $n$ registers where each $c_t$ will be mapped to a quantum register. In~\cite{glos2020space}, this constraint was decomposed into the conjunction of constraints $c_t \neq c_{t'}$ for $t\neq t'$. Alternatively, the initial state can be constructed as the equal superposition of permutations over $[n]$~\cite{bartschi2020grover}, in which case the need for constraints is eliminated. However, both approaches require $\tildorder{n^3}$ gates. 

Instead, we propose a permutation verification program given in Algorithm~\ref{alg:permutation}. The idea of the algorithm is to count occurrences of each city $i\in [n]$ and store the count in a new register called count$_i$. Then $\{c_t\}$ forms a permutation iff all count$_i = 1$. Note that the algorithm also checks whether $c_t<n$ for all $t$ as a side benefit, making it optional to start in a superposition of $\{0,1,\dots,n-1\}$. Instead of counting the exact number of occurrences, it is sufficient to count the parity of the occurrences, which replaces addition modulo $n$ with addition modulo 2 on a single-qubit count$_i$. This can be implemented simply by a multicontrolled-NOT gate, with count$_i$ as the target. This follows from the fact that for any incorrect assignment, at least one city is not visited, and the corresponding $c_i$ is $0$. This simplification saves quite a significant number of qubits and gates, although it does not significantly impact the cost complexity.

Note that such a high-level classical program may require additional resources, which are not evident at first but may be needed during quantum circuit implementation. This is because some basic operations might not be reversible and require auxiliary qubits. 
For instance, assigning 1 to the flag register if `count$_i$ is not equal to 1' is not reversible as the flag does not have a predetermined value. This can be avoided in two ways. One approach is to use a separate auxiliary bit $b_i$ for each of count$_i$, initially set to $1$. Then, if the condition is met, the value of `flag' is simply swapped with the value of $b_i$. We added two comments in Alg.~\ref{alg:permutation} that show how to adapt the program accordingly. 
A better approach is to compute the negation of logical AND operation applied to all count$_i$ values and store the result directly in the flag bit-- this can be easily achieved with a multi-controlled NOT gate (for which we anyway may want to use additional qubits). Note that all these alternatives can be precisely expressed within the framework of the algorithm.

Let us analyze the number of qubits and gates in the aforementioned program. Note that we need $n\lceil \log n\rceil $ qubits for $c_i$, $2n$ qubits for count$_i$ and $b_i$ and 1 qubit for `flag'. Note that variables $i$ and $t$ do not need to be stored on a quantum computer. Regarding gates, we have $\order{n^2}$ controlled additions modulo 2, which can be implemented using only $\order{\polylog n}$ gates as explained in Appendix Sec.~\ref{sec:implementation-details}. In addition, we need $\order{n}$ gates for setting the value of `flag' (since each time the operation acts only on 3 qubits, the number of gates per count$_i$ is constant). Thus, in total, we need $
\order{n \log n}$ qubits and $\tildorder{n^2}$ gates. The corresponding constraint function is 
\begin{equation}
        f_C = [ \{c_t\}\text{ is not a permutation}].
\end{equation}
It could be argued that constraint function with co-domain $\{0,1\}$ is not suitable for the optimization process due to the exponentially decreasing ratio of permutations to all possible assignments, i.e. $\frac{n!}{n^n} = \order{e^{-n}\sqrt{n}}$. However, if count$_i$ would not be computed modulo 2, one instead compute $f_C'=\sum_i(\text{count}_i -1)^2$ or $f_C''=\sum_i|\text{count}_i -1|$ as introduced in the previous subsection, which will require $\tildorder{n}$ qubits and gates. In fact, the first approach with the squared constraint function corresponds to an exact simulation of QAOA with the W-state~\cite{wang2020x} as the input state for the appropriate mixer.

\begin{algorithm}[t]
	\caption{Program for permutation verification}
    \label{alg:permutation}
	\begin{algorithmic}[]
		\REQUIRE $(c_0,\dots,c_{n-1})$ -- the list of integers to be verified if they form a permutation of the range $\{0,1,\dots,n-1 \}$
		\FOR{$i=0$ to
			$n-1$ } 
		\STATE count$_i  \gets 0$  \hfill {// if used, also $b_i=1$}
		\ENDFOR 
		\FOR{$t=0$ to $n-1$} 
		\FOR{$i=0$ to $n-1$} 
		\IF{$c_t = i$ } 
		\STATE count$_i \gets  ($count$_i + 1) \mod 2 $ \hfill {// modulo 2 is optional}
		\ENDIF
		\ENDFOR 
		\ENDFOR 
		\STATE flag $ \gets 0 $
        \STATE {// Alternatively to the loop below `flag $\gets \neg$ AND$_{i=0}^{n-1}\;  {\rm count}_i$'}
		\FOR{$i=0$ to
			$n-1$ }  
		\IF{count$_i \neq 1$ } 
		\STATE ${\rm flag} \gets 1$ \hfill {// more precisely swap `flag' with $b_i$}
		\ENDIF
		\ENDFOR
		\RETURN flag
	\end{algorithmic}
\end{algorithm}

\begin{algorithm}[t]
	\caption{Program for computing the cost of the route.}
	\label{alg:tsp-costroute}
	\begin{algorithmic}[]
		\REQUIRE $(c_0,\dots,c_{n-1})$ -- the list of integers denoting visited cities, $W$ -- cost matrix
		\FOR{$i=0$ to $n-1$} 
		\FOR{$t=0$ to $n-1$} 
		\STATE $\textrm{flag} \gets c_{t}= i$
		\IF{$\textrm{flag} = 1$} 
		\STATE $\textrm{edge}_i \gets c_{t+1}$ 
		\ENDIF 
        \STATE \textbf{uncompute} $\textrm{flag} \gets c_{t}= i$
		\ENDFOR
		\ENDFOR 
		\STATE cost $ \gets 0 $
		\FOR{$i=0$ to $n-1$} 
		\FOR{$j=0$ in $[0,1,\dots, i-1,i+1,\dots,n-1 ]$}
		\IF{$ \textrm{\rm edge}_i =j$}
		\STATE {cost $\gets$ cost $ +W_{i,j}$  }
		\ENDIF
		\ENDFOR 
		\ENDFOR
		\RETURN cost
	\end{algorithmic}
\end{algorithm}
Finally, we should calculate the total cost of the route given the list of visited cities. The usual approach is to consider all pairs of consecutive time points, identify the cities visited at each time point, and then incorporate the appropriate cost~\cite{lucas2014ising,glos2020space,wang2020x,bartschi2020grover}. However, all of these approaches resulted in Ising models with $\tildorder{n^3}$ terms. This is expected because for each pair of consecutively visited cities $(c_t, c_{t+1})$, we need to consider all possible pairs of cities, which requires encoding the \emph{entire matrix $W$}. Since the $W$ matrix has $\order{n^2}$ degrees of freedom, and we have $n$ such pairs of consecutive times,  $\order{n^3}$ gates are expected in total as we will show later. On the other hand, if we examine Miller-Tucker-Zemlin Integer Linear Programming (MTZ ILP) formulation~\cite{miller1960integer}, we can see that by choosing a city-to-city encoding with edge$_{i}$ representing the next visited city after city $i$, the objective value is simply given by $\sum_{i,j} [\text{edge}_i=j] W_{i,j}$. Note that this representation allows us to check each possible combination of consecutively visited cities only once. Unfortunately, as demonstrated in Appendix Sec.~\ref{sec:cost-tsp}, MTZ ILP requires cumbersome and costly checks to determine whether $\{\text{edge}_i\}_i$ forms a permutation, which was simpler with time-to-city encoding.

We propose to keep the original representation for verification of the permutation and, for the sake of computing the objective value, switch to city-to-next-city encoding. The program is given in Algorithm~\ref{alg:tsp-costroute}. To implement the program, we start with the same integers $c_t$. The algorithm starts by changing the time-to-city representation into city-to-next-city. Each step consists of two parts: checking if $c_{t}$ is equal to city $i$ and then copying the value of $c_{t+1}$ into ${\rm edge}_i$, which will hold the city visited after city~$i$. Then, with this new representation, we can implement an algorithm similar to the one used for Max-$K$-Cut to compute route cost. Note that within the same program, we use the procedure \textbf{uncompute} that simply applies the inverse of the marked step (in this case, uncomputes the value of the flag).

Quantum circuit corresponding to the Algorithm~\ref{alg:tsp-costroute} requires
$n\lceil \log n \rceil$ qubits for edge$_i$ integers, 1 qubit for `flag' and $\order{\log n}$ qubits for `cost', assuming $\max_{i,j} W_{i,j} = \order{1}$. Note that integers $i$, $j$, and $t$ are again not required to be stored on the quantum computer. Changing the encoding requires $\order{n^2}$ checks for $c_t=i$ and controlled copies from $c_{t+1}$ into edge$_i$, both of which can be implemented with $\order{\log n}$ gates. Then, we need $\order{n^2}$ checks for edge$_i =j$ and controlled additions, which also requires $\order{\log n}$ gates. Altogether, both Algorithms \ref{alg:tsp-costroute} and \ref{alg:permutation} require $\order{n\log n}$ qubits and  $\tildorder{n^2}$ gates. Using Algorithms \ref{alg:permutation} and \ref{alg:tsp-costroute}, the choice of mixer includes both $X$-mixer or Grover mixer~\cite{bartschi2020grover}, or mixer proposed previously for Max-$K$-Cut.

\subsection{Other problems} 

Prog-QAOA has been demonstrated so far for two important paradigmatic problems. However, its application goes beyond the cases considered here and is applicable even to more complicated problems. To support this claim, we refer the reader to Appendix Sec.~\ref{sec:additional-examples}, where we present how the Prog-QAOA can be implemented for Integer Linear Programs, Travelling Salesman Problem with Time Windows, and a special case of Graphs Isomorphism. The first two problems are known to be NP-complete, while it is not known for the last one whether it is in P (although a quasi-polynomial algorithm is known for general case~\cite{babai2016graph}). For all of these problems, we were able to reach lower bounds for the number of qubits. In addition, the number of gates required is smaller than the ones in the referenced cases.

\section{Technical Details}\label{sec:technical}

\subsection{Parallelization and yield} In the previous section, we provided simple programs for important and paradigmatic optimization problems. Although it may seem straightforward to design quantum circuits for a given program, more efficient circuits can be obtained, taking into account the capabilities of the quantum devices. For instance, certain gate-based quantum hardware, such as IBM quantum computers, allow parallel application of the gates. This suggests that the operations involving different variables can be parallelized and executed simultaneously. For example, let us consider Algorithm~\ref{alg:maxkcut_control}. The checks $c_i=c_j$ can be run in parallel provided that flag$_{i,j}$ exists for each such pair. While by adopting this approach, the number of required qubits would increase into $\order{n^2}$, the depth would decrease from $\tildorder{n^2}$ to $\order{n\log n}$ using round-robin tournament ordering~\cite{rasmussen2008round}. Unfortunately, the `sum' variable cannot be parallelized that easily, as each addition operation `blocks' access to this variable. One could consider, at the cost of extra auxiliary variables, doing a quasi-parallel summation, in which partial sums are computed for consecutive pairs of addends, then partial sums are computed for consecutive pairs of partial sums,  etc.

However, it can be shown that there is no need for $\order{n^2}$ flags in this case, and in fact, there is no need for a `sum' variable. Suppose that the objective function $f$ function can be written as a sum of individual functions:
\begin{equation}
    f_C = f_{1} + \dots + f_{L}.
\end{equation}
Since exponentiation of diagonal matrices is a multiplicative function, implementing a program corresponding to $f$ is equivalent to sequentially implementing programs for each $f_{i}$. The same applies to the constraint functions as well. In the case of Max-$K$-Cut, rather than designing a single program with double loops, we can instead write a program for each pair of different $c_i, c_j$. Then, since we can consider at most $\lceil n/2 \rceil$ pairs with round-robin scheduling in parallel, it is easy to show that only $\order{n}$ flag variables are needed.

Unfortunately, decomposing programs is not always a straightforward task. As an example, let us consider the program for the objective function of TSP given in Algorithm~\ref{alg:permutation}. In this case, one may implement a separate program for each $W_{i,j}$. However, splitting the objective function as was done for Max-$K$-Cut would require repeating the encoding change from time-to-city into city-to-city each time, which is itself resource-consuming. Instead, by recognizing that the objective value is the sum of the weights, one can simply return the value of $W_{i,j}$ each time $W_{i,j}$ whenever it is supposed to be added.

Therefore, in addition to parallel implementation, we introduce the concept of \textbf{yield}. Similar to the \textbf{return} statement, \textbf{yield} returns some value (in the case of the quantum circuit,  it corresponds to applying appropriate $Z$ rotations on the returned `variable'). However, instead of immediately performing the whole uncomputation part, the program continues allowing for the possibility of returning other values. The resulting quantum circuit consists of rotations interleaved with the computation required by other sub-programs, with each rotation appearing in correspondence with the \textbf{yield} statement.  An example program for objective value computation for TSP using the yield concept is presented in Alg.~\ref{alg:tsp-costroute-yield}. 

Several interesting observations can be made here. First of all, it may be seen that our program is independent of matrix $W_{i,j}$. However, with this new formalism, we have effectively incorporated $W_{i,j}$ into $Z$-rotations that are proportional to its value. For clarity, we omitted this and simply added an appropriate comment to the program. For the original Alg.~\ref{alg:tsp-costroute}, $W_{i,j}$ were already incorporated into the variable cost. 

In addition, the uncomputation phase of the whole program is simplified with the usage of \textbf{yield}. Note that since we do not \textbf{yield} or \textbf{return} any value during uncomputation, in the case of Alg.~\ref{alg:tsp-costroute-yield}, the`flag' computation and its uncomputation appear right after each other. These operations clearly cancel out, making the second double-loop unnecessary, so only the uncomputation of the edge$_i$ computation is needed. Through this, we can see that the uncomputation process can be much shorter than the computation itself. 

\begin{algorithm}[t]
	\caption{Program for computing the cost of the route with \textbf{yield}}
	\label{alg:tsp-costroute-yield}
	\begin{algorithmic}[]
		\REQUIRE $(c_0,\dots,c_{n-1})$ -- the list of integers denoting visited cities, $W$ -- cost matrix
		\FOR{$i=0$ to $n-1$} 
		\FOR{$t=0$ to $n-1$} 
		\STATE $\textrm{flag} \gets c_{t}= i$
		\IF{$\textrm{flag} = 1$} 
		\STATE $\textrm{edge}_i \gets c_{t+1}$ 
		\ENDIF 
        \STATE \textbf{uncompute} $\textrm{flag} \gets c_{t}= i$
		\ENDFOR
		\ENDFOR 
		\FOR{$i=0$ to $n-1$} 
		\FOR{$j=0$ in $[0,1,\dots, i-1,i+1,\dots,n-1 ]$}
            \STATE \textrm{flag}  $\gets \textrm{\rm edge}_i =j$
		      \STATE \textbf{yield} flag \hfill //And apply $Z$ rotation proportional to $W_{i,j}$
            \STATE \textbf{uncompute} \textrm{flag}  $\gets \textrm{\rm edge}_i =j$
		\ENDFOR 
		\ENDFOR
	\end{algorithmic}
\end{algorithm}

 \subsection{Limited connectivity circuit optimization}

So far, we have not taken into account the topology of the quantum hardware. Hence, the cost estimations of the algorithms will be valid only for all-to-all connectivity but not for limited connectivity. However, for structured connectivity like the linear nearest neighbor (LNN) or 2D repetitive structures like a 2D grid or a heavy-hexagonal, we can assume that the variables are placed in specific positions of the device. For example, in a grid architecture, a variable requiring four qubits can be stored as a square in the array. In addition to the operations required for the programs, we also need to consider the gates required for swapping qubits~\cite{o2019generalized}.

General hardware architecture connectivity is beyond the scope of this paper but also is not common for large-scale quantum systems -- usually, hardware follows a 1D or 2D pattern. In this paper, we will only focus solely on LNN architecture. In LNN, we assume all variables are arranged in a line. For example, for the case of permutation checking for TSP presented in Alg.~\ref{alg:tsp-costroute-yield}, we can choose an ordering $(c_0,\dots,c_{n-1}, flag, edge_0, \dots,edge_{n-1})$. Then, in addition to the operations, we can introduce `\textbf{swap} var1, var2' commands, which will effectively swap physical (qu)bits corresponding to the logical variables. In general, this kind of swapping requires a different strategy and results in so-called swap networks where each bit is treated as a separate variable~\cite{o2019generalized}. In our case, since we necessarily have variables requiring multi-qubit variables, we will make the necessary adaptations. A more detailed description of the mentioned strategies and cost analysis is provided in Appendix Sec.~\ref{sec:imp-lnn}.

To start with, suppose that we are given an $n$ qubit quantum register where each qubit represents a variable and we want to apply operations between all pairs of variables. In \cite{o2019generalized}, a swap network achieving this task is proposed requiring $n(n-1)/2$ SWAPs. The idea relies on repeatedly swapping pairs of adjacent qubits, ensuring that all pairs are adjacent at some point allowing operation to be applied. In our case, where variables are represented by multiple qubits, it will be desirable to swap entire quantum registers consisting of multiple qubits. Given two registers consisting of $k$ and $m$ qubits respectively, the procedure which we call \emph{register swap} requires $km$ SWAPs and $\sim k+m$ depth. Now, let us consider a sequence of variables ordered as $(v_1,\dots,v_n)$ where each variable is represented by a $k$-qubit register and we want to apply operations on $(v_i,v_j)$ pairs in any order. Combining the strategy for swapping qubits with register swap, this process requires $n(n-1)/2$ register swaps, which in total requires $\sim \frac{1}{2}k^2n^2$ SWAPs and $\sim 2kn$ depth. We call this \emph{intra-all-to-all}. This swap network will be particularly beneficial for Max-$K$-Cut.

Suppose now we have a sequence of variables ordered as $(v_1,\dots,v_n,w_1,\dots,w_n)$ and we want to \textit{interlace} them, i.e. produce a new ordering $(v_1,w_1,\dots,v_n,w_n)$. This may be beneficial, for instance, when one wants to apply operations between $(v_i,w_j)$ pairs. The \textit{interlacing strategy} involves performing bit-by-bit operations on all the qubits, starting from the leftmost qubit of the register storing $w_1$, and swapping this to the left whenever there is a qubit of any register corresponding to $v_i$ on its left side for $i>1$. By following this strategy, we can obtain the desired ordering immediately when all of the bits of $w_i$ are on the right side of bits corresponding to $v_i$, and on the right side of bits corresponding to $v_{i+1}$.  This strategy requires $\sim \frac{1}{2}kmn^2$ SWAPs and $\sim\max(nk+m, nm+k )$ depth on LNN architecture, assuming $v_i$ and $w_j$ require $k$ and $m$ bits respectively.

Another equally important strategy is \emph{inter-all-to-all strategy}, which involves applying operations on $(v_i, w_j)$ pairs when the variables are in the order $(v_1,w_1,\dots,v_n,w_n)$.  Here, we assume that the operation can be applied in any order of such pairs. To achieve this, we fix the registers corresponding to variables $w_i$ and swap the remaining registers repeatedly as in intra-all-to-all, which allows all $(v_i,w_j)$ pairs to be adjacent at some point. In addition to the basic gates required for the operations, $\sim n^2 ((m+k)\min(m,k)+km)$ additional SWAPs and $n(2k+2m+\min(m,k))$ depth are required for this strategy, assuming that $v_i$ and $w_i$ require $k$ and $m$ qubits respectively. This swap network can be used for both permutation checking and, after some adaption, for computing objective value for TSP. 

The strategies mentioned above may be extended to include operations applied on triples or quadruples, but the ones discussed above will be sufficient for our purposes. The details on how we apply the swap networks for Max-$K$-Cut and TSP can be found in the methods.

\section{Cost and Performance Analysis}\label{sec:cost}

In this section, we first identify the most important cost metrics for NISQ-era optimization and then analyze the algorithms we have presented so far to compare them with the existing QAOA variants in the literature. 

\subsection{Cost metrics}

Well-designed circuits are those which use a particularly small amount of resources. Aside from being a multicriterial problem, the magnitude of each criterion may depend on many factors like the properties of the quantum machine or the choice of the classical optimizer. Instead of trying to quantify the quality of the circuit implementation through ambiguous quality measures like noise-robustness, low running time, and the quality of the results, we will consider simple yet important cost metrics that impact the aforementioned measures.

Let us start with the noise-robustness cost metrics. In order to diminish the effect of noise on the computation, it is desired to reduce the hardware computational resources of the quantum circuit. One of the most significant bottlenecks is the number of required physical qubits, so-called the width of the circuit. Reducing the number of qubits allows for the use of smaller quantum hardware, which, in turn, is more likely to yield coherent quantum evolution. Furthermore, each gate introduces noise because of the imprecision of the device, and consequentially the number of gates should be minimized. Finally, the depth of the circuit, typically considered for all-to-all, linear nearest neighbor (LNN), or lattice connectivity of the qubits, should be minimized to reduce the decoherence~\cite{wang2021noise}. One should take into account that minimizing the number of qubits often enlarges the depth of the circuit. This trade-off between the depth and the width was observed for many problems and is of great interest~\cite{glos2020space,fuchs2021efficient,tabi2020quantum}. To combine these two metrics, we will consider the volume of the circuit defined as the product of the depth and width of the circuit.

To minimize the run-time of the optimization, it is also important to minimize the number of required circuit runs. That depends on many factors, including the choice of the optimization algorithm and the quantum error mitigation scheme used. Because of the equivalence between diagonal Hamiltonians and the pseudo-Boolean function, one can consider the Chebyshev bound to be used for estimating the minimum number of shots. However, the Chebyshev bound requires knowing the upper bound on the variance of the sampled variable. Instead, it is much easier to use Hoeffding inequality to show that for precision $\varepsilon$, we need 
$M \geq  \frac{-(E_{\max} - E_{\min})^2 \log (p/2)}{2\varepsilon^2}$,
measurements, where $E_{\max }$ and $E_{\min}$ are the maximal/minimal energies of the pseudo-Boolean function, and $p$ is the probability of Hoeffding inequality to work. Since $E_{\min}$ and $E_{\max}$ are not known, their lower and upper bounds respectively can be used. Note that in such case we can use the energy span of the objective function for determining the lower bound on the number of shots required.

First or higher-order derivatives may be required depending on the optimization procedure used. In fact, recent results using Rydberg atom arrays based quantum hardware suggest that first-order methods like Adam may be of great interest~\cite{ebadi2022quantum}. This is because estimating gradient or higher-order derivatives does not complicate the quantum circuit thanks to the shift rule, but only increases the total number of required measurement samples. While one can use the finite difference method to approximate the gradient, a more common way (and also more precise) is to compute it as a linear combination of the derivatives coming from the trainable parameterized gates. This provides an unbiased estimator of the gradient, which is important for reaching
high-quality optima~\cite{sweke2020stochastic}. One could use the simultaneous perturbations method, which at small costs may provide sufficient estimation, but also one can estimate each addend of the gradient estimator independently as in \cite{sweke2020stochastic}. For the former two methods, energy span is the only cost metric required; however, for the latter method, the number of required measurements is proportional to the number of parameterized gates used in the ansatz, which is our next cost metric. Note that the first two methods may be beneficial against the last one only if many parameterized gates share a single optimized parameter, which is typical for QAOA, but not so usual for VQE. One also has to remember that executions of different unitaries is a resource-consuming task~\cite{zhou2023performance}.

Finally, it is also essential that the classical optimization algorithm can efficiently investigate the landscape of the objective function, thus being more likely to provide high-quality results. This can be attained by limiting the subspace of evolution to a smaller subspace containing the solution, which we call the search space. In the original QAOA proposed by \cite{farhi2014quantum} with the $X$-mixer, the search space is the whole Hilbert space. However, for many problems, a much smaller search space can be obtained with an ingenious formulation and mixer selection~\cite{bartschi2020grover,wang2020x,hadfield2019quantum}. For example, for the TSP over $n$ cities, the dimension of the search space can be dropped from $2^{n^2}$~\cite{lucas2014ising} to $n^n=2^{n\log n}$ with $XY$-mixer~\cite{wang2020x} or even to $n!\sim 2^{n \log n}$ with Grover mixer~\cite{bartschi2020grover}, despite all requiring $n^2$ qubits. This enables faster convergence to high-quality solutions. We propose comparing the logarithm of the dimension of the search space, which is called in this paper the size of the search space. Hence, if the space consists of only polynomially many times more elements, it will not affect the complexity of the search space size. Eventually, the logarithm of the dimension has a natural interpretation of `how many qubits would we need if we could squeeze the search space into the minimal number of qubits'. The search space size is not larger than the number of qubits and not smaller than the logarithm of the number of feasible solutions to the problem.

In the case of the number of gates, it is natural to expect that for QAOA, the lower bound grows not slower than the number of degrees of freedom. This is because all degrees of freedom need to be incorporated into the quantum circuit, and each degree of freedom is expected to contribute with at least one (parameterized) gate. Unfortunately, it is less evident for the depth of the circuit -- in principle, more qubits may allow applying gates in parallel within depth $\order{1}$. Such depth was achieved using a parity encoding~\cite{ender2022modular} at the cost of a very large number of qubits proportional to the number of Pauli terms. While the scenario may be particularly applicable for low-depth high-width quantum hardware, we propose here an alternative approach: if the minimal numbers of qubits $a$ and gates $b$ are achieved, then the depth cannot be lower than $\order{b/a}$, assuming that up to 2-local gates are allowed only. Thus we consider the minimal number of qubits over the minimal number of gates as a lower bound on the depth, both on all-to-all and LNN architectures. 

When comparing two implementations, we say that the first is almost as efficient as the second one with respect to the chosen cost metric if the corresponding values differ only up to a polylogarithmic factor. Furthermore, we say that the implementation is near-optimal according to some metric if it reaches the minimum possible value up to a polylogarithmic factor. Note that we drop `near' if their complexities are the same in $\Theta$ notation.

\subsection{Max-$K$-Cut}

Let us begin with determining the lower bounds on the cost measures. Any solution to the original problem can be written as a function $f: [n]\to[K]$ where $[n]$ denotes the set $\{0,1,\dots,n-1\}$, and there are only $K^n$ of them. We can conclude that the lower bound on the search space size and the number of qubits is $\log(K^n) = \Theta(n\log K)$. Since an instance of the problem is defined by $W$, the minimum number of gates and the parameterized gates is $\Omega(n^2)$. Assuming that the optimal number of qubits is used, it is easy to see that the lower bound for the depth is $\order{n\log K}$. 

So far, many different encodings have been proposed for the Max-$K$-Cut  problem~\cite{lucas2014ising,wang2020x,tabi2020quantum,fuchs2021efficient}. The first one represents the color of a node with a one-hot formulation so that we have $n$ quantum registers, each having $K$ bits~\cite{lucas2014ising}.
Therefore, the proposed QUBO formulation requires $nK$ qubits. Because of  $X$-mixer, defined as the sum of 1-local Pauli-$X$, the evolution takes place in the full Hilbert space so the size of the search space is also $nK$. One can easily reduce the search space by using $XY$-mixer and initializing each register with the $W$-state~\cite{wang2020x}. Yet, this encoding still requires $nK$ physical qubits. Encodings with $\sim n\log K$ qubits were achieved by using binary encoding instead of one-hot encoding, using two dissimilar approaches~\cite{tabi2020quantum,fuchs2021efficient}. In~\cite{tabi2020quantum}, authors provide a formulation using higher-order terms. On the contrary, in~\cite{fuchs2021efficient} the authors propose an alternative technique using the Kronecker delta implementation, which we will refer to as Fuchs-QAOA. This is an implicit example of Prog-QAOA, as the problem Hamiltonian is not implemented through direct implementation of each Pauli term. However, in both HOBO and Fuchs-QAOA, the depth seems to be unnecessarily large. Assuming each parameter is introduced with a small number of gates, one expects that a volume of $\order{n^2\polylog(n,K)}$ should suffice. However, for \cite{tabi2020quantum} the volume is $\tildorder{n^2K}$, while for \cite{fuchs2021efficient} it is $\tildorder{n^2K^2}$ in the worst-case scenario, leaving room for improvement for both of them. 

We propose Prog-QAOA implementation which starts in the superposition of basic states encoding only the valid colors in binary, using Grover-mixer as the mixer~\cite{bartschi2020grover} and the program given in Alg.~\ref{alg:maxkcut_control} but taking into account parallization mentioned at the beginning of Technical Details section. For LNN, we use intra-all-to-all strategy. Further details can be found in Appendix Sec.\ref{sec:swap-networks}. Asymptotic analysis shows that the Prog-QAOA approach is the only one that is near-optimal for all cost measures for sufficiently large $K$, see Table~\ref{table:maxkcut}. The derivations are available in Appendix Sec.~\ref{sec:cost-maxcut}.

In order to analyze the dependency in small-$K$ regime, in Fig. \ref{fig:maxkcut-numerics} we present a numerical analysis of the most important cost metrics on the most promising QAOA variants. The details of the numerical experiment can be found in Appendix Sec.\ref{sec:maxkcut-numerics}

Already for $K\leq60$ colors, we can see the advantage of implementing Prog-QAOA. While keeping the scaling in the number of qubits with HOBO and Fuchs-QAOA, Prog-QAOA has significantly smaller depth both on all-to-all and LNN architectures. Starting from $K=17$ colors, the number of gates becomes even smaller than those needed for X-QAOA and XY-QAOA. The depth on LNN also becomes the lowest, starting at $K=45$ colors.

\begin{table}[h!]\centering
	\setlength{\tabcolsep}{8pt}\small 
	\begin{tabular}{@{}lcccccc@{}}\toprule
		& opt.        & X-QAOA  & XY-QAOA   & HOBO  & Fuchs-QAOA  & 
		Prog-QAOA                                                                                                                                          \\\midrule
		qubits      & $n \log K$  & $nK$   & $n K$     & $n\log K$  & $n\log K$      & $n \log K$ \\
		gates       & $n^2$       & $n^2K$ & $n^2K$    & $n^2K$     & $n^2K^2\log K$ & $n^2\log K$		\\
		depth       & $n /\log K$ & $n$    & $n$       & $nK$       & $nK^2 \log\log K$        & $n\log\log K$ \\
		depth (LNN) & $n/\log K$  & $nK$   & $nK^\ast$ & $nK\log K$ & $nK^2 \log K$ & $n\log K$ \\
		energy span & $n^2$       & $n^2K$ & $n^2$    & $n^2$      & $n^2$        & $n^2$\\
		param. gates & $n^2$       & $n^2K$ & $n^2K$    & $n^2K$     & $n^2$         & $n^2$	\\
		eff. space   & $n\log K$   & $nK$   & $n\log K$ & $n \log K$ & $n\log K$     & $n\log K$ \\ \bottomrule
	\end{tabular}
	\caption{\textbf{The resources required by variants of QAOA-type algorithms for
			the Max-$K$-Cut problem.} The table summarizes the cost measures for X-QAOA~\cite{lucas2014ising},
		XY-QAOA~\cite{wang2020x}, HOBO~\cite{tabi2020quantum},
		Fuchs-QAOA~\cite{fuchs2021efficient} and Prog-QAOA. We used the fact that $K< N$,
		$\|W\|_\infty=\order{1}$ and that all penalty parameters are $\order{1}$ when
		simplifying expressions. The derivations are available in Appendix Sec.~\ref{sec:cost-maxcut}.}
	\label{table:maxkcut}
\end{table}

\begin{figure}[t!]
	\centering
	\includegraphics[scale=0.8]{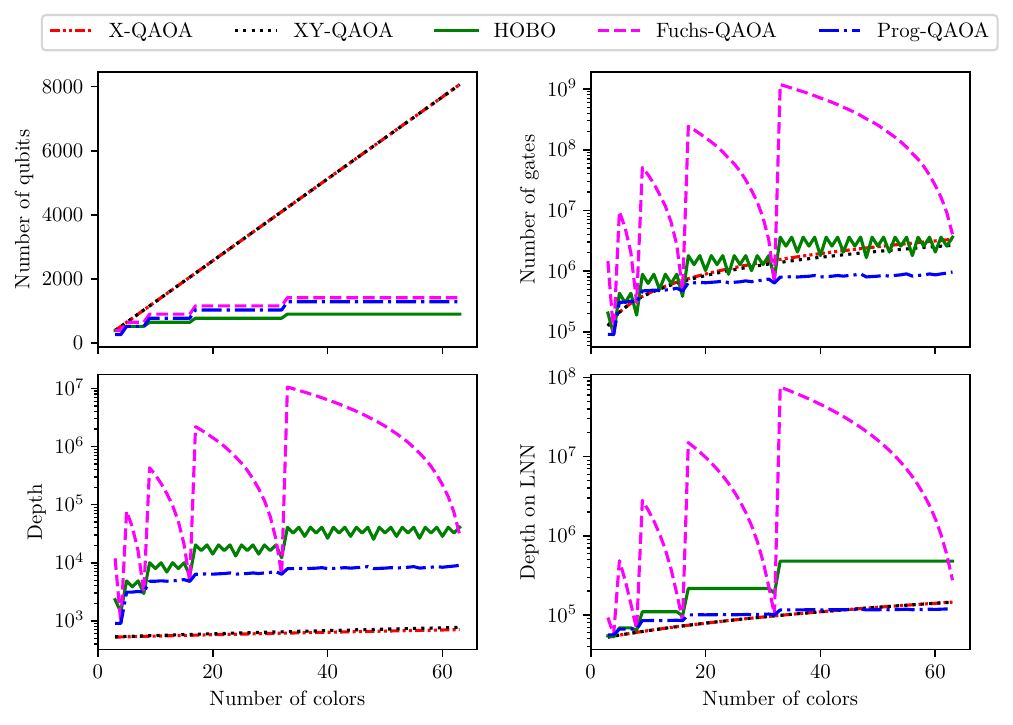}
	\caption{\label{fig:maxkcut-numerics} The scaling of the resources required for various types of QAOA for the Max-$k$-Cut problem. We performed numerical experiments to test the scaling in the number of qubits, number of gates, and depth on both all-to-all and LNN architectures. We used a weighted complete graph with $128$ nodes and colors ranging from $3$ to $63$. Note that the scaling of Prog-QAOA in the number of gates is clearly the best, and it also outperforms other variants in the depth on LNN. Further details on the numerical experiment can be found in Appendix Sec.~\ref{sec:maxkcut-numerics}.}
\end{figure}

\subsection{TSP}

Let us start with estimating lower bounds on the cost measures. There are $n!$ possible routes, which give the lower bound on the search space size $\log(n!)=\Theta(n\log n)$~\cite{glos2020space}. Since any TSP instance is defined by cost matrix $W$, the minimum number of gates is $\Omega(n^2)$, and the same is true for the parameterized gates. Assuming that the problem representation uses the minimum possible number of qubits $\order{n \log n}$, it is easy to show that $\order{n/\log n}$ depth can be reached at the very best. There is a more general argumentation about why we may not be able to go beyond $\order{n}$ depth. If we assume that each qubit represents exactly one city, then it is very natural to expect that it should interact with at least $n-1$ more qubits (representing $n-1$ different cities). Such interactions cannot be done in parallel, indicating that at least $n-1$ depth is needed. Taking these two arguments into account, we find it justified to assume that $\order{n}$ is the optimal possible depth up to a polylogarithmic factor. Finally, let us consider the difference between minimal and maximal energy. The minimal objective value of the original problem can be easily lower bounded by $n\min_{i\neq j}W_{i,j}$; similarly, the maximal energy can be upper bounded by $n\max_{i\neq j}W_{i,j}$. Under the assumption $\|W\|_\infty =\order{1}$, we can assume that the energy span of the Hamiltonian does not grow faster than $\order{n}$.

Many different Hamiltonians have been considered so far for this paradigmatic
problem~\cite{lucas2014ising,gilliam2021grover,miller1960integer,gonzalez2022gps}. The original QUBO formulation presented in~\cite{lucas2014ising} takes the form
\begin{equation}
A_1 \sum_{t={0}}^{{n-1}} \left( 1 -\sum_{i={0}}^{{n-1}} b_{t,i} \right)^{\!\!2} + A_2 \sum_{i={0}}^{{n-1}} \left( 1 -\sum_{t={0}}^{{n-1}} b_{t,i} \right)^{\!\!2} \\
+ \sum_{\substack{i,j={0}\\i\neq j}}^{{n-1}} W_{i,j} \sum_{t={0}}^{{n-1}} b_{t,i}b_{t+1,j},
\end{equation}
where $b_{t,i}$ is a binary variable that is equal to 1 iff city $i$ is visited at time $t$, requiring $\order{n^2}$ qubits. With this QUBO formulation in hand, one can start with the equal superposition state and use the $X$-mixer, which we will refer to as X-QAOA. Alternatively, one may start in a product of $W$-states and use $XY$-mixer~\cite{wang2020x} with an appropriate QUBO without the Hamiltonian part ensuring a single city is visited for each $t$. Furthermore, one may start with the superposition of permutations and use the Grover mixer~\cite{bartschi2020grover} with a QUBO consisting of only the route cost term. We will denote those by XY-QAOA and GM-QAOA, respectively. In \cite{glos2020space}, a HOBO representation is proposed, with $2\lceil \log n\rceil$-local terms, but using only $\Theta(n\log n)$ qubits. In all of the encodings above, a function $f: t \mapsto v$ is encoded, with an interpretation `city $f(t)$ is visited at time $t$'. Finally, we consider Miller-Tucker-Zemlin Integer Linear Programming (MTZ ILP) \cite{miller1960integer}, which can be transformed into QUBO in a usual way~\cite{salehi2022unconstrained}. 
Starting in an equal superposition of valid cities only, using the Grover-mixer as the mixer and using Alg.~\ref{alg:tsp-costroute-yield} and Alg.~\ref{alg:permutation} to implement the phase operator, we obtain Prog-QAOA for TSP that reaches near-optimal values for all cost measures. For both programs, we used inter-all-to-all strategy for LNN. For all-to-all connectivity for permutation verification, we also go over all possible $(t,i)$ using round-robin tournament. Besides our implementation being the best of all the considered, it also breaks the $\order{ n^3}$ volume bound observed in other models.

In Table~\ref{table:tsp}, we present the quality metrics for all the formulations discussed above. The derivations are available in Appendix Sec.~\ref{sec:cost-tsp}. We can see that the circuit volume is $\order{n^3}$ for all of them. For X-QAOA, XY-QAOA, GM-QAOA, and HOBO, it can be shown that the mentioned volume is optimal. All of the formulations encode function $f$ mapping time points $t$ to cities, as explained before. The values $f(t)$ for different $t$ are stored in disjoint quantum registers. However, for each pair $t,t+1$, we have to include the complete information about the cost function $W$ that has $\order{n^2}$ degrees of freedom. If each register has $\order{d}$ qubits, one cannot implement it with a depth smaller than $\order{n^2}$, which gives the final volume $\order{n^2/d \cdot nd}=\order{n^3}$. We would like to note that the results presented in the table for Prog-QAOA with Grover mixer are valid for the other mixers as well.

Furthermore, the LNN architecture has a destructive impact on the depth of X-QAOA and XY-QAOA. In~\cite{o2019generalized}, the authors show that any QUBO can be implemented in depth proportional to the number of qubits it uses. For the QUBO representation from~\cite{lucas2014ising}, we can prove that any swap strategy with any initial arrangement of qubits results in $\Omega(n^2)$ depth using Theorem~\ref{theorem:depth-lowerbound-lnn} presented in Appendix Sec.~\ref{sec:depth-lnn}; hence it cannot be improved.  

\begin{table*}[t!]\centering
	\setlength{\tabcolsep}{6pt}\small 
	\begin{tabular}{@{}lccccccc@{}}\toprule
		& optimal      & X-QAOA & XY-QAOA  & GM-QAOA   & MTZ ILP         &
		HOBO        & Prog-QAOA                                                   \\
		\midrule
		qubits      & $n\log n$ & $n^2$  & $n^2$     & $n^2$     & $n^2\log n$ &
		$n\log n$   & $n\log n$                                                  \\
		gates       & $n^2$     & $n^3$  & $n^3$     & $n^3$     & $n^3$       &
		$n^3$       & $n^2$                                                      \\
		depth       & $n$       & $n$    & $n$       & $n^2\ast$     & $n\log n$         &
		$n^2$       & $n \log n $                                                        \\
		depth (LNN) & $n$       & $n^2$  & $n^2 $    & $n^2\ast $ & $n^2\log n$        &
		$n^2\log n$       & $n \log n $                                                        \\
		energy span & $n$       & $n^3$  & $n^2$     & $n$       & $n^4$       &
		$n^2$       & $n$                                                        \\
		param.gates & $n^2$     & $n^3$  & $n^3$     & $n^3$     & $ n^3$      &
		$n^3$       & $n^2$                                                      \\
		eff.space   & $n\log n$ & $n^2$  & $n\log n$ & $n\log n$ & $n^2\log n$       &
		$n\log n$   & $n\log n$                                                  \\
		\bottomrule
	\end{tabular}
	\caption{\textbf{The resources required by variants of QAOA-type algorithms for the TSP
	problem.}
	The table summarizes the cost measures for X-QAOA~\cite{lucas2014ising},
	XY-QAOA~\cite{wang2020x}, GM-QAOA~\cite{bartschi2020grover}, MTZ ILP~\cite{miller1960integer}, HOBO~\cite{glos2020space}, and Prog-QAOA. We used the fact that $\|W\|_\infty=\order{1}$. `$\ast$' indicates that our estimation may not be tight, and could be theoretically improved. The derivations are available in Appendix Sec.~\ref{sec:cost-tsp}.
 }\label{table:tsp}
\end{table*}

\begin{figure}[th!]\centering
	\includegraphics[scale=0.85]{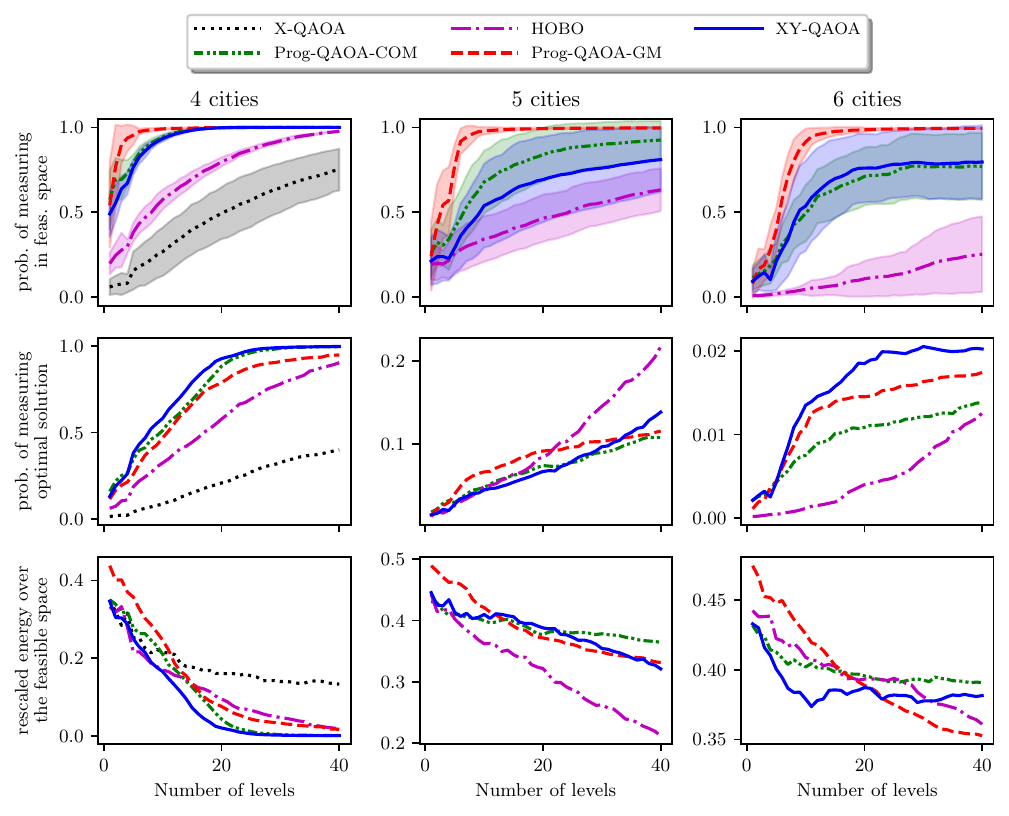}
	\caption{\label{fig:tsp-numerics} Optimization results for various encodings.
		Starting from the top row, we considered the probability of getting a feasible
		solution (a permutation), the probability of obtaining the optimal solution,
		and the mean energy of the optimized quantum state. Note that all the QAOA
		variants exhibit similar quality, with Prog-QAOA-GM and XY-QAOA being the best
		among them. The area in the first row denotes the deviation with respect to the
		standard deviation; for the second and third rows, the areas were much more
		robust and thus were omitted for clarity.}
\end{figure}

In Fig.~\ref{fig:tsp-numerics} we present the optimization results for TSP instances with 4, 5, and 6 cities using different variants of QAOA. We considered Prog-QAOA with Grover mixer (Prog-QAOA-GM) and Prog-QAOA with $b_t$-wise
complete mixer (Prog-QAOA-COM). 
For comparison, we chose the HOBO-based QAOA from~\cite{glos2020space}, X-QAOA \cite{lucas2014ising}, GM-QAOA~\cite{gilliam2021grover}, and XY-QAOA with simultaneous complete $XY$-mixer \cite{wang2020x}. Since the penalty applied strongly depends on the model chosen, to assess the performance, we choose metrics that do not depend on the constraint Hamiltonians: the
probability of measuring a feasible solution, the probability of measuring the optimal solution, and the rescaled energy of the feasible space. The last one is computed by first projecting the quantum state onto the feasible space, then normalizing it, and finally computing its energy against the Hamiltonian $\frac{H-E_{\min}}{E_{\max}-E_{\min}}$, where $H$ is the problem Hamiltonian of the corresponding QAOA, and $E_{\min}$, $E_{\max}$ are its minimum and maximum energies. The details of the numerical experiment can be found in Appendix Sec.\ref{sec:tsp-numeric}.

The only QAOA version that was clearly superior over Prog-QAOA and other presented algorithms was GM-QAOA, with the initial state being the superposition of permutations~\cite{bartschi2020grover} -- the significant difference forced us to remove it from the plots to improve the visibility of the plot. The reason behind this is likely to be a much smaller effective space: for GM-QAOA, the number of quantum states from the pure evolution is only $5! = 120$, while for Prog-QAOA it is $5^5=3125$. Note that the difference in the number of feasible solutions will continue to grow as $\log(n^n)-\log(n!) = \Theta(n)$ based on the Stirling approximation. However, we would also like to note that GM-QAOA is much more costly in terms of both qubits and depth, as presented in Tab.~\ref{table:tsp}.

Yet, using techniques from Prog-QAOA, GM-QAOA can be simulated using only $\order{ n \log n} $ qubits without increasing the depth. It is enough to replace the quantum circuit preparing the initial state for Prog-QAOA with the circuit preparing an equal superposition of permutations where each city is encoded in binary. This can be performed by adapting the procedure for one-hot encodings~\cite{bartschi2020grover}, and the details can be found in Appendix Sec.~\ref{sec:permutations}. With this choice of the initial state, the objective Hamiltonian takes the form 
\begin{equation*} 
\sum_{t=0}^{n-1}\sum_{\substack{i,j=0\\i\neq j}}^{n-1} W_{i,j}\delta(b_t, i) \delta(b_{t+1}, j),
\end{equation*} which can be implemented using Algorithm~\ref{alg:tsp-costroute}. Overall, the depth and the number of gates are equal to that of the GM-QAOA because of the costly initial state preparation, but the number of required qubits is $\order{n \log n}$ instead of $\order{n^2}$. Hence, we can simulate the exact behavior of GM-QAOA using a much smaller amount of qubit resources. Further improvement of the cost measures depends on more efficient initial state preparation. However, at the moment of writing, we are not aware of a more efficient implementation of superposition of permutations neither for one-hot nor for binary encoding.

\section{Discussion and Conclusion}\label{sec:discussion}

Implementing a cost Hamiltonian by applying each Pauli term independently may result in redundancy. Consequently, the cost of running it on a quantum computer increases, and this may detriment the fidelity of the obtained results and hence, the likelihood of quantum advantage. We developed the novel framework Prog-QAOA, built upon the idea of simulating the problem Hamiltonians with simple-to-design quantum circuits using the classical pseudocode function as an input. We showed that Prog-QAOA effectuates a great improvement for Max-$K$-Cut and TSP problems on all the identified quality measures and provides an alternative approach for integer linear programs. It is worth mentioning that the underlying functions still correspond to some Hamiltonians, however, a natural input in the case of a simulated variational algorithm is a pseudo-code, which is much simpler to design than the strict unconstrained mathematical model consisting only of binary variables. This may require further investigation concerning the efficient quantization of classical algorithms. We also would like to point out the strong connection between the oracle design for Grover-type algorithms and the quantum circuit construction for Prog-QAOA.

Proper design of Prog-QAOA results in an efficient or even optimal quantum circuit design, minimizing all cost measures simultaneously. We showed that this effect occurs for important NP-hard problems like Max-$K$-Cut and TSP. For others, we were able to obtain at least a trade-off between the depth and number of qubits, which are commonly considered to be the most significant factors. The importance of such trade-offs has been already pointed out in the literature, as the difficulty of increasing the evolution fidelity or the size of the quantum memory may heavily depend on the chosen technology~\cite{glos2020space,sawaya2020resource}.

Since our approach has a smaller depth, a smaller number of qubits, and a smaller number of gates, it is natural to expect that the fidelity of the resulting quantum states would be much better than any other designs. As many qubits are reused, one can also apply mid-circuit postselection error mitigation techniques by measuring ancilla qubits which should be set to $\ket{0}$ and project the quantum state into the feasible subspace \cite{botelho2022error,mcardle2019error}. Note that the circuits designed for constraint verification, like the permutation verification algorithm introduced in this paper, can be also utilized with such methods. However, one should also mention that the noise in our encoding will have a very different effect than when QUBO is implemented in the usual way because instead of having only a local effect on the quantum state, one can expect here a `global' effect, especially when implementing relatively complicated circuits. We believe this may be an interesting next direction of research, yet to pursue this (at least for TSP), one would need to compare the algorithm with a large quantum computer.

Let us mention that currently QAOA and Quantum Annealing are the two prominent and competing methods for solving combinatorial optimization with quantum hardware. While quantum annealers have currently a remarkable number of qubits (over 5600 by D-Wave) compared to the largest gate-based quantum computer (289 qubits by QuEra Computing Inc.~\cite{ebadi2022quantum}), there is also a significant difference in the numbers of qubits required by the two methods. As an example, let us again consider TSP. As it was shown, for the gate-based model, it is enough to use $\order{ n^2 }$ qubits~\cite{bartschi2020grover,lucas2014ising}, or even $\order{ n \log n}$ qubits as presented in~\cite{glos2020space} or in this paper. However, in the case of existing quantum annealers, one should also take into account the limitation of only 2-local interactions on a limited 2D graph topology. Assuming that the interactions for future quantum annealers will still follow 2D or even 3D topology, the physical interaction graph will have a bounded degree. This means that the number of interactions available on the machine will grow proportionally to the number of physical qubits. However, since for the TSP, we need $\order{n^3}$ interactions for a QUBO~\cite{lucas2014ising,glos2020space}, the embedding will require at least $\order{n^3}$ physical qubits, and thus the quantum annealers need to grow much faster than the gate-based quantum computers. Note that the same problem will occur for any optimization problem defined over permutations. A possible remedy might be to allow more than 2-local interactions between qubits.

Our results could be expanded in several vital ways. First of all, it would be interesting to provide similar circuits for other problems to investigate the limitations of the proposed approach. In particular, we believe the Quadratic Assignment Problem may be an interesting next problem to consider. Our preliminary analysis suggests that while providing the trade-off between the number of qubits and depth is not challenging with the tools developed in this paper, optimizing both cost metrics simultaneously requires a more elaborate investigation. After an in-depth investigation of other optimization problems, one may consider designing a quantum-classical programming language, which will be able to compile the given pseudocodes into efficient quantum circuits. Such software not only would allow simple generation of near-optimal, or at least efficient quantum circuits, but it also would greatly reduce the difficulty in implementing real-world problems into a quantum computer. Taking into account that a deep understanding of low-level details of quantum computing may impose an entry barrier for researchers from different domains, a high-level abstraction would facilitate the engagement of practitioners in the field. Eventually, this will lead to a need for a good overview of implementations of various primitives, either of logical operations (AND, OR), arithmetic operations (addition, multiplication, etc.) or more advanced data structures oriented primitives (database searching, sorting). We proposed and designed some of the primitives like equality, AND (multi-controlled Toffoli),  controlled copy, or arithmetic operations. At the same time it is clear  that such a list is far from being complete, especially when considering compilation for limited connectivity, or when combining those primitives by SWAP networks like ones introduced in the paper. Standarizing and gathering the most widely applicable primitives would be useful for easy and effective design of Prog-QAOA circuits. Finally, one could analyze the quality of the developed framework on existing or near-future quantum hardware to assess its noise-robustness.

\paragraph{Data Availability} The datasets generated during and/or analyzed
during the current study are available in the Zenodo~\url{doi.org/10.5281/zenodo.6867471}.

\paragraph{Code Availability} Code and its description used to generate the
the dataset is available on Zenodo~\url{doi.org/10.5281/zenodo.6867471}.


\paragraph{Acknowledgements}

A.G. has been partially supported by National Science Center under grant agreements 2019/33/B/ST6/02011, and 2020/37/N/ST6/02220. \"O.S. acknowledges support from National Science Center under grant agreement 2019/33/B/ST6/02011. B.B. and Z.Z. acknowledges support from NKFIH through the Quantum Information National Laboratory of Hungary, the Thematic Excellence Programme TKP2020-NKA-06, the QuantERA II project HQCC-10101773 and Grants No. K124152, FK135220, KH129601 and KDP-2023.

\bibliographystyle{plainnat}
\bibliography{optimal-tsp}

\paragraph{Competing Interests} The authors declare that there are no competing
interests.

\paragraph{Author Contribution} AG proposed the idea of the research. AG, \"OS and ZZ selected the relevant optimization problems. AG and \"OS designed the presented algorithms and made an analytical investigation of the presented and referenced algorithms. BB was responsible for numerical investigations for Max-$K$-Cut, and AG was responsible for numerical investigations for TSP. ZZ quantified the impact of the results. All authors contributed to the preparation of the paper.  

\clearpage
\appendix
\onecolumn

\section{Exponential number of terms for one-hot constraint function} \label{sec:exp-terms}
In this section, we will show that the cost Hamiltonian corresponding to HOBO
\begin{equation}
\left [\sum_{i=1}^k b_i \neq 1 \right ] 
\end{equation}
has an exponential number of terms. First, it is straightforward to check that 
\begin{equation}
    \left [\sum_{i=1}^k b_i \neq 1 \right ] 
 =  1 -  \sum_{i=1}^k b_i\prod_{j\neq i} (1-b_j).
\end{equation}
The RHS of above can be transformed into the cost Hamiltonian the usual way by substituting $b_i \leftarrow \frac{1-s_i}{2}$ which gives 
\begin{equation}
    1 -  \sum_{i=1}^k b_i\prod_{j\neq i} (1-b_j) \equiv 1 - \sum_{i=1}^k \frac{1-s_i}{2}\prod_{j\neq i} \left (1-\frac{1-s_j}{2} \right) = 1 - \frac{1}{2^k} \sum_{i=1}^k(1-s_i)\prod_{j\neq i}  (1+s_j).
\end{equation}
Let us compute the coefficient for term $s_{i_1}\dots s_{i_m}$ for $0<m\leq k$. The corresponding coefficient will be
\begin{equation}
    \frac{1}{2^k} ( m \cdot (-1) + (k-m)\cdot 1) = \frac{k-2m}{2^k}.
\end{equation}
We can see that the coefficient equals zero iff $m = \frac{k}{2}$, which is impossible for odd $k$ and happens only for $\binom{k}{k/2} \sim \frac{2^{k}}{\sqrt{k\pi/2}}$ among all $2^k$ coefficients. This shows that the number of coefficients is indeed exponential. Note that the same results can be obtained for the other substitution $b_i \leftarrow \frac{1+\tilde s_i}{2}$ as $\tilde s_i = - s_i$ thus it only impacts the sign of the coefficient.

\section{Depth on LNN for QAOA and XY-QAOA on LNN}\label{sec:depth-lnn}

In~\cite{glos2020space}, authors showed that it is not possible to run QAOA for TSP for $n$ vertices with $X$-mixer on LNN in depth $\order{ n ^{2-\varepsilon}}$ for any $\varepsilon>0$. Here we generalize their reasoning into a general interaction graph $G$ created based on a QUBO. We assume the qubit-qubit nonzero interactions for a QUBO are given by some simple undirected graph $G$, and it has to be implemented on the quantum hardware with LNN qubit connectivity. To implement such interactions, one needs to specify the initial arrangement of the qubits and the swap strategy so that all spins that should interact are neighboring at some point.

\begin{theorem}\label{theorem:depth-lowerbound-lnn} Let $G=(V,E)$ be a simple undirected graph with radius $r$ representing the interactions of a QUBO. For any combination of
	swap networks and initial arrangement of the vertices onto LNN to implement the
	$G$-generated QUBO on LNN architecture, the depth is at least $\Theta(\#V/r)$.
\end{theorem}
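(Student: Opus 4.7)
The plan is to reduce the depth lower bound to a classical statement about graph bandwidth. The starting point is an elementary light-cone observation for LNN: every circuit layer is composed of 2-qubit operations (native SWAPs, QUBO exponentials, or merged SWAP+ZZ units) acting only on physically adjacent wires, so within a single layer the physical position of any logical qubit can change by at most one. Consequently, after $D$ layers the position of any qubit differs from its initial position by at most $D$.

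For every edge $(u,w) \in E$ of the interaction graph the corresponding term must eventually be implemented, so the two qubits must be physically adjacent at some time $t^\ast \leq D$. Writing $p^{(0)}$ for the initial placement and applying the triangle inequality along both trajectories,
\[
|p_u^{(0)} - p_w^{(0)}| \leq |p_u^{(0)} - p_u^{(t^\ast)}| + 1 + |p_w^{(t^\ast)} - p_w^{(0)}| \leq 2D + 1,
\]
so the initial arrangement realizes a linear layout of $G$ of bandwidth $B \leq 2D+1$.

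The second ingredient is the classical bandwidth--diameter inequality. Let $u_1$ and $u_n$ be the vertices placed at positions $1$ and $n = \#V$, and fix any path $u_1 = v_0, v_1, \ldots, v_k = u_n$ in $G$ with $k \leq \mathrm{diam}(G)$. Then
\[
n - 1 = |p_{u_1}^{(0)} - p_{u_n}^{(0)}| \leq \sum_{i=1}^{k}|p_{v_{i-1}}^{(0)} - p_{v_i}^{(0)}| \leq k B \leq \mathrm{diam}(G)\cdot B,
\]
which gives $B \geq (n-1)/\mathrm{diam}(G)$. Since the diameter of a connected graph is at most twice its radius, $B \geq (n-1)/(2r)$, and combining with $B \leq 2D+1$ yields $D = \Omega(\#V/r)$. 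A matching $\order{\#V/r}$ swap schedule (e.g.\ following a BFS traversal rooted at a center vertex) completes the $\Theta$ statement.

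The main technical nuance will be justifying that the unit-speed pebble bound is robust when native QUBO exponentials are freely interleaved with SWAPs; this rests on the fact that any admissible gate is at most 2-local between physically adjacent qubits, so no gate can teleport a logical qubit further than its neighbor in one layer. A secondary concern is that the bandwidth--diameter estimate implicitly assumes $G$ is connected (otherwise $\mathrm{diam}(G)$ is not finite); this is exactly the ``finite radius'' hypothesis in the theorem, and otherwise the argument applies component by component to the largest component containing a linear fraction of the vertices.
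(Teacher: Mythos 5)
Your proof is correct and is essentially the paper's argument in different clothing: the key step in both is that any two $G$-adjacent logical qubits must begin within LNN distance $2D+1$ of each other, which is then chained along paths in $G$ (you phrase this as a bandwidth--diameter inequality between the two extremal positions; the paper iterates the neighborhood bound $r$ times from the vertex at position $1$), yielding $n-1 \leq \order{r}\cdot(2D+1)$ and hence $D=\Omega(\#V/r)$. Your closing claim of a matching $\order{\#V/r}$ swap schedule is not needed for the ``at least'' statement and is in fact false for general radius-$r$ graphs (e.g.\ a clique on $n/2$ vertices with a path of length $n/2$ attached has radius $\Theta(n)$ but needs depth $\Omega(n)$ just from its $\Omega(n^2)$ edge count), but this does not affect the validity of your lower-bound argument.
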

\begin{proof}
	Let $n=\#V$ and let $A:V\to [n]$ be an arbitrary bijection of vertices on the
	LNN. Let $\bar v$ be the vertex mapped to the first qubit on the LNN network. Let
	there be a swap network such that it will be
	possible to implement the QUBO in depth $D$. Then it means that all neighbors
	of $\bar v$ in $G$ are within distance $2D+1$ in LNN.
	
	However the same is true for all neighbours of $\bar v$ in $G$, which means that
	all neighbours of neighbours of $\bar v$ in $G$ are at most at distance $2\cdot
	(2D+1)$ from $\bar v$ in LNN. The same argument can be repeated $r$ times. Since
	the radius of the graph is $r$, then it means that all nodes are within distance
	$r(2D+1)$ in LNN. Since the largest distance in LNN is exactly $n-1\leq
	r(2D+1)$, it follows that $D \geq \frac{n-1}{2r} - \frac{1}{2}$.
\end{proof}

Using this theorem, the results immediately follow for any general graph with finite radius $r$, and one can show that the depth is $\Omega(n)$, which also recovers the results presented in~\cite{glos2020space}. It is also worth mentioning that $\order{n}$ depth via the swap networks for general QUBO is presented in~\cite{o2019generalized}.

\begin{theorem}\label{theorem:depth-lowerbound-lnn-constant-radius} Let $G=(V,E)$ be a simple undirected graph with bounded radius representing the interactions of a QUBO. It can be implemented on LNN architecture with minimum depth $\order{\#V}$, which is achievable.
\end{theorem}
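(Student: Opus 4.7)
The plan is to show matching lower and upper bounds on the required depth, each by invoking a result already stated in the paper. For the lower bound I would apply Theorem~\ref{theorem:depth-lowerbound-lnn} with $r = \order{1}$, which immediately gives that any swap-network implementation has depth at least $\Theta(\#V/r) = \Omega(\#V)$.

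For the matching upper bound, I would appeal to the generic swap-network construction of~\cite{o2019generalized}, referenced in the discussion around Theorem~\ref{theorem:depth-lowerbound-lnn}, which implements an arbitrary QUBO on $n$ qubits in LNN depth $\order{n}$. Instantiating this with $n = \#V$ gives a circuit of depth $\order{\#V}$ that realizes every interaction of $G$; gates corresponding to non-edges of $G$ can simply be dropped, which does not increase depth. Combining the two bounds yields $\Theta(\#V)$ as the minimum achievable depth, proving the claim.

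The main obstacle is essentially only conceptual: one must verify that the construction of~\cite{o2019generalized} is agnostic to which edges of the complete pairwise-interaction pattern are actually present, so that the structure of $G$ (beyond the vertex count) does not affect the depth guarantee. This is built into their swap network, which cyclically brings every pair of qubits into adjacency within depth $\order{n}$ regardless of which specific pairs need to interact; hence for any $G$ on $\#V$ vertices with $r = \order{1}$ the two bounds meet and the theorem follows.
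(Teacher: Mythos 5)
Your proposal is correct and matches the paper's own proof essentially verbatim: the lower bound is obtained by applying the general radius-$r$ lower bound theorem with $r=\order{1}$, and the upper bound comes from the generic $\order{\#V}$-depth swap network of the cited reference. Your added remark that the swap network brings every pair into adjacency regardless of which edges are actually present is a reasonable (and correct) elaboration of a point the paper leaves implicit.
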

\begin{proof}
	One cannot go below $\order{\#V}$ because of Theorem~\ref{theorem:depth-lowerbound-lnn}. On the other hand swap network proposed in~\cite{o2019generalized} allows achieving $\order{\#V}$ for any graph $G$.
\end{proof}

\section{Implementation details for Prog-QAOA} \label{sec:implementation-details}

In this section, we provide implementation details of circuit parts and swap networks used in the paper.

\subsection{Implementation of `is $\ket x$ equal to $\ket k$' for known $k$}
\label{sec:xor} First, we change the quantum state $\ket{x }$ into $\ket{x \oplus k}$, where $\oplus$ is the bitwise addition modulo 2. This can be done with NOT gates applied on the appropriate qubits. Then if $\ket{x}=\ket k$, we will get $\ket 0$ on all qubits. After this, we negate all qubits and apply a NOT operation with controls on all $x$ qubits and target on the ancilla qubit. The circuit takes the form
\begin{center}
	\begin{quantikz}[row sep=1ex]
		\lstick{$\ket{x_{0}}$} &
		\gate[4]{\oplus k} & \targ{} & \ctrl{1} & \qw \\
		\lstick{$\ket{x_{1}}$} &   & \targ{} & \ctrl{1} & \qw \\
		\lstick{$\ket{x_{2}}$} &   & \targ{} & \ctrl{1} & \qw \\
		\lstick{$\ket{x_{3}}$} &  & \targ{} & \ctrl{1} & \qw \\
		\lstick{$\ket 0$} & \qw & \qw & \targ{} & \qw \\
	\end{quantikz}
\end{center} 
If we use the implementation of ancilla-free multi-controlled NOT from~\cite{saeedi2013linear,da2022linear} designed for LNN architecture, we obtain a circuit with depth $\Theta(\log |x|)$, where $|x|$ is the number of qubits of the quantum state $\ket x$.

\subsection{Implementation of controlled copy}\label{sec:controlled-copy}

Let us assume we have three quantum states: source state $\ket{x_s}$, the output state $\ket{x_o}$, and the single-qubit control state $\ket{f}$. We assume that $\ket{x_o}$ is initialized to $\ket{0\cdots0}$. Our goal is to implement the unitary operation
\begin{gather}
\ket{f=0}\ket{x_s}\ket{0\cdots0} \mapsto \ket{0}\ket{x_s}\ket{0\cdots0}, \\ \ket{f=1}\ket{x_s}\ket{0\cdots0} \mapsto \ket{1}\ket{x_s}\ket{x_s} ,
\end{gather}
that copies state $\ket {x_s}$ to $\ket{x_o}$ iff $\ket f = \ket 1$. This can be
implemented in all-to-all architecture as
\begin{center}
	\begin{quantikz}[row sep=1em, column sep=1em]
		\lstick{$\ket{f}$} & \ctrl{1} & \ctrl{2} &
		\ctrl{3} & \qw  \\
		\lstick{$\ket{x_{s,1}}$} & \ctrl{3} & \qw & \qw & \qw \\
		\lstick{$\ket{x_{s,2}}$} & \qw & \ctrl{3} & \qw & \qw \\
		\lstick{$\ket{x_{s,3}}$} & \qw & \qw & \ctrl{3} & \qw \\
		\lstick{$\ket{x_{o,1}}$} & \targ{} & \qw & \qw & \qw \\
		\lstick{$\ket{x_{o,2}}$} & \qw & \targ{} & \qw & \qw  \\
		\lstick{$\ket{x_{o,3}}$} & \qw & \qw & \targ{} & \qw  \\
	\end{quantikz}
\end{center}

Let us now consider the implementation on LNN. First, we apply a strategy similar to interlacing and reorder the registers as 
\begin{equation} \ket{x_{s,1}}\cdots\ket{x_{s,n}}\ket{f}
\ket{x_{o,1}}\cdots\ket{x_{o,n}} \mapsto \ket{f}\ket{x_{s,1}}\ket{x_{o,1}}
\cdots\ket{x_{s,n}}\ket{x_{o,n}}.
\end{equation}
This will require $\order{n^2}$ CNOT gates and $\order{n}$ depth.

Then we implement the following qubit by qubit: 
\begin{center}
	\begin{quantikz}[row sep=1.5ex, column sep= 1.5ex]
		\lstick{$\ket f$} & \swap{1}	& \qw &	\ctrl{1} & \qw & \rstick{$\ket{x_s}$}\\
		\lstick{$\ket {x_s}$}  & \targX{} & \swap{1} &  \targ{} & \qw & \rstick{$\ket{x_o=f \otimes x_s}$} \\
		\lstick{$\ket{x_o=0}$} & \qw & \targX{} &  \ctrl{-1} & \qw  & \rstick{$\ket{f}$}
	\end{quantikz}
\end{center}
This requires $\order{n}$ gates and $\order{n}$ depth.

We end up with the quantum state with controlled copy applied, but in incorrect order:
\begin{equation}
\ket{x_{s,1}}\ket{x_{o,1}} \cdots\ket{x_{s,n}}\ket{x_{o,n}}\ket{f}.
\end{equation}
Now we apply interlacing technique on $\ket{x_s}$, $\ket{x_o}$ registers, which gives us
\begin{equation}
\ket{x_{s,1}}\cdots\ket{x_{s,n}} \ket{x_{o,1}}\cdots\ket{x_{o,n}} \ket{f},
\end{equation}
and finally we move $\ket f$ so that we end up with
\begin{equation}
\ket{f}\ket{x_{s,1}}\cdots\ket{x_{s,n}} \ket{x_{o,1}}\cdots\ket{x_{o,n}} ,
\end{equation}
which is the original order. The number of gates required for this reordering is $\order{n^2}$, and the depth is $\order{n}$. 

Overall, the total number of gates for the controlled copy is $\order{n^2}$ and the depth is $\order{n}$.

\section{Implementation details for Prog-QAOA on limited connectivity}

\subsection{Swap networks}\label{sec:swap-networks}
In this section, we present and analyze the cost of the swap network used in the paper.

Suppose we have $n$ quantum registers, each consisting of a single qubit, and we want to perform some 2-qubit operations on all pairs of registers. In \cite{o2019generalized}, a swap network is provided requiring $n(n-1)/2$ SWAPs. We formally state the result in the following lemma and theorem. Let us remark that at the end of this procedure, the registers end up in the reversed order.

\begin{lemma}[\cite{o2019generalized}] \label{lemma:qubit-all-to-all}
Let $\pi=\pi_2 \circ \pi_1$ be a permutation
	over $[n]$ where
	\begin{equation}
	\pi_1 = (1,2)(3,4)\dots(2k-1,2k), \qquad
	\pi_2 = (2,3)(4,5)\dots(2k,2k+1)
	\end{equation}
	for odd $n=2k+1$, and for even $n=2k$
	\begin{equation}
	\pi_1 = (1,2)(3,4)\dots(2k-1,2k), \qquad
	\pi_2 = (2,3)(4,5)\dots(2k-2,2k-1).
	\end{equation}
	Then the following is true: after applying $\lfloor \frac{n}{2} \rfloor$ times permutation $ \pi $ and then for odd $n$ only one $\pi_1$ on the sequence
	$(1,\dots,n)$, each pair of numbers $i$ and $j$ will appear next to each other at least once.
\end{lemma}
\begin{theorem}[\cite{o2019generalized}] \label{theorem:qubit-all-to-all}
Let $\Hl_{A_1}\cdots \Hl_{A_n}$ be registers, with $\Hl_{A_i}$
	having $1$ qubit each. Let $U_{A_i,A_j}$ be commuting operations 		
	on registers $A_i,A_j$ for $i,j
	\in [n], i\neq j$. Then additional $\sim \frac{1}{2}n(n-1)$ SWAPs and $\sim n$ depth are sufficient to implement the operations $U_{A_i,A_j}$ for $i,j \in [n], i \neq j$ on LNN.
\end{theorem}

In general, one may need to perform operations on every pair of multi-qubit registers. Before generalizing Thm.~\ref{theorem:qubit-all-to-all} for multiqubit registers, let us consider the cost of swapping two registers.

\begin{theorem}[Register swap]\label{theorem:regswap}
	Let $\Hl_A \otimes \Hl_B$ be the product of quantum registers with $k,m$ qubits
	consecutively. There exists a circuit with $km$ SWAPs and $\sim k+m$
	depth which reorders the quantum registers as $\Hl_B \otimes \Hl_A$ on LNN
	architecture.
\end{theorem}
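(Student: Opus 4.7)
The plan is to construct the circuit explicitly as a bubble-sort style SWAP network and then count gates and depth, using the fact that each SWAP is three CNOTs on an LNN architecture. Label the qubit positions on the LNN by $1,2,\dots,n+m$, with $\Hl_A$ initially occupying positions $1,\dots,n$ and $\Hl_B$ occupying positions $n+1,\dots,n+m$. For each $j=1,\dots,m$ in turn, I would slide the $j$-th qubit of $B$ leftwards through the $A$-block: starting from its current position, apply nearest-neighbor SWAPs $(n+j-1,n+j),(n+j-2,n+j-1),\dots,(j,j+1)$. This moves $b_j$ from its initial site to position $j$ while preserving the relative order of the qubits it passes through. After all $m$ phases, the ordering is $b_1,\dots,b_m,a_1,\dots,a_n$, i.e.\ $\Hl_B\otimes \Hl_A$, and the internal order within each block is preserved.

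Each of the $m$ phases uses exactly $n$ SWAPs (one per crossing of an $A$-qubit), giving $nm$ SWAPs and hence $3nm$ CNOTs in total. This settles the gate-count bound.

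For the depth, the key observation is that the sequence of SWAPs described above can be compiled into a brick-wall schedule: at time step $t$, one applies every SWAP of the form $(p,p+1)$ such that a $B$-qubit at position $p+1$ is moving to $p$ and no other SWAP at the same step touches positions $p$ or $p+1$. Concretely, while $b_j$ is still traveling through the $A$-block, $b_{j+1}$ has already been initiated two positions behind it, so the SWAPs for consecutive $b_j$'s form a staircase pattern with no conflicts. A short bookkeeping argument (tracking when $b_1$ first reaches position $1$ and when $b_m$ finally reaches position $m$) shows that the total number of SWAP layers is $n+m-1$ in the strictest count and $\sim n+m$ asymptotically; since one SWAP compiles to three CNOTs in series on LNN, the overall CNOT depth is $\sim 3(n+m)$.

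The only step with any real subtlety is the depth analysis, where one has to verify that the two constraints -- a single $b_j$ moves one step per layer, and two simultaneous moving $b_j$'s must stay at LNN-distance at least two apart -- can be simultaneously satisfied by the staircase schedule. Everything else reduces to counting SWAPs and invoking the standard $\mathrm{SWAP}=3\,\mathrm{CNOT}$ identity, so no further obstacle arises.
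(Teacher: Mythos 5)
Your proposal is correct and follows essentially the same route as the paper's proof: the same bubble-sort SWAP network moving each $B$-qubit leftward through the $A$-block ($nm$ SWAPs, hence $3nm$ CNOTs), and the same pipelined staircase scheduling yielding $n+m-1$ SWAP layers and depth $\sim 3(n+m)$. No substantive differences to report.
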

\begin{proof}
	Given the basic state
	\begin{equation}
	\ket{a_1}\cdots \ket{a_k}\ket{b_1}\cdots \ket{b_m},
	\end{equation}
	we apply a sequence of SWAP gates on the qubits $(k,k+1)$, $(k-1,k)$, \ldots,$(1,2)$ to get the state
	\begin{equation}
	\ket{b_1}\ket{a_1}\cdots \ket{a_k}\ket{b_2}\cdots \ket{b_m}.
	\end{equation}
	Note that the procedure requires $k$ SWAP gates. We repeat the procedure for $b_2,\dots,b_k$ each requiring $k$ swaps, which gives $km$ SWAPs in total.
	
	In an `ideal' unrealistic scenario, we would achieve depth $k$ if every SWAP would be performed in parallel. However, a delay is required in order to create an extra qubit between $b_i$ and $b_{i+1}$. For example, consider the following:
	\begin{equation}
	\begin{split}
	\ket{a_1}\cdots \ket{a_k}\ket{b_1}\cdots \ket{b_m} &\mapsto	\ket{a_1}\cdots \ket{a_{k-1}}\ket{b_1}\ket{a_k}\ket{b_2}\ket{b_3}\cdots \ket{b_m}\\
	&\mapsto \ket{a_1}\cdots \ket{b_1} \ket{a_{k-1}}\ket{b_2}\ket{a_k}\ket{b_3}\ket{b_4}\cdots \ket{b_m} \mapsto \cdots.
	\end{split}
	\end{equation}
	
	Let $k\geq m$. It requires $k$ layers of SWAP to create the state
	\begin{equation}
	\ket{b_1}\ket{a_1}\ket{b_2}\ket{a_2}\cdots \ket{b_m}\ket{a_m}\ket{a_{m+1}} \cdots \ket{a_{k}}.
	\end{equation}
	It requires extra $m-1$ SWAP layers to create the desired state, as after each layer, a single $\ket{b_i}$ is in the right position. So for $k\geq m$ the depth is $k+m-1$. Note that $k<m$ is a symmetric case.
\end{proof}

Now we are ready to generalize the swap networks given for single qubit registers in Thm.~\ref{theorem:qubit-all-to-all} to multiqubit registers.

\begin{theorem}[Intra-all-to-all strategy]\label{theorem:intra-all-to-all} Let
	$\Hl_{A_1}\cdots \Hl_{A_n}$ be registers, with $\Hl_{A_i}$
	having $k$ qubits each. Let $U_{A_i,A_j}$ be commuting operations 		
	on registers $A_i,A_j$ for $i,j
	\in [n], i\neq j$. Then additional $\sim \frac{1}{2}k^2n^2$ SWAPs and $\sim 2kn$ depth are required to implement the operations $U_{A_i,A_j}$ for $i,j \in [n], i \neq j$ on LNN.
	
\end{theorem}
\begin{proof}
    We adopt the strategy for single qubit registers from Thm.~\ref{theorem:qubit-all-to-all}. We apply permutations $\pi_1,\pi_2$ for $n$ times in total on registers $\Hl_{A_i}$.  After swapping the registers according to $\pi_1$, we apply the operations $U_{A_i,A_j}$ for the pairs $(\Hl_{A_i},\Hl_{A_{j}})$ that are swapped. Similar reasoning applies for $\pi_2$. To swap registers $\Hl_{A_i}$ and $\Hl_{A_j}$, we use the register swap from Thm.~\ref{theorem:regswap}, requiring $k^2$ SWAPs and $\sim 2k$ depth for each swap. Each $\pi_1$ and $\pi_2$ requires $\sim n/2$ register SWAPs, resulting in a total of $\sim\frac{1}{2} k^2n^2$ SWAPs and $\sim 2kn$ depth. 
\end{proof}

Given registers of the form $\Hl_{A_1}\Hl_{A_2}\cdots \Hl_{A_n}\Hl_{B_1}\Hl_{B_2}\cdots \Hl_{B_n}$, one may want to perform operations between register pairs $\Hl_{A_i}\Hl_{B_j}$. To achieve this, it is beneficial to interlace the registers first.

\begin{theorem}[Interlacing strategy]\label{theorem:reginterlace} Let
	$\Hl_{A_1}\Hl_{A_2}\cdots \Hl_{A_n}\Hl_{B_1}\Hl_{B_2}\cdots \Hl_{B_n}$ be a
	sequence of registers. Suppose $\Hl_{A_i}$ has $k$ qubits and $\Hl_{B_i}$ has
	$m$ qubits for all $i$. Then one can reorder quantum registers as
	$\Hl_{A_1}\Hl_{B_1}\Hl_{A_2}\Hl_{B_2}\cdots \Hl_{A_n}\Hl_{B_n}$ with $\sim \frac{1}{2}kmn^2$ SWAPs and $\sim \max(nk+m, nm+k )$ depth on LNN architecture.
\end{theorem}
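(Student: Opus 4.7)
The plan is to reduce Theorem~\ref{theorem:reginterlace} to $n-1$ successive applications of Theorem~\ref{theorem:regswap}, executed in a pipelined fashion. Concretely, I would first move $B_1$ past the contiguous A-block $A_2 A_3 \cdots A_n$ (treated as a single register of size $(n-1)k$), placing it between $A_1$ and $A_2$. Then I move $B_2$ past the remaining block $A_3 \cdots A_n$ of size $(n-2)k$, and so on, until $B_{n-1}$ has been moved past $A_n$. After these $n-1$ moves, the registers sit in the desired order $A_1 B_1 A_2 B_2 \cdots A_n B_n$.

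For the CNOT count, Theorem~\ref{theorem:regswap} gives the cost of moving $B_i$ (size $m$) past a block of $(n-i)k$ qubits as $3(n-i)km$ CNOTs. Summing over $i=1,\dots,n-1$ yields
\begin{equation*}
\sum_{i=1}^{n-1} 3(n-i)km \;=\; 3km \cdot \tfrac{n(n-1)}{2} \;\sim\; \tfrac{3}{2}\,n^{2}km,
\end{equation*}
matching the stated bound, with no cancellations or overhead from pipelining.

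The depth bound is more delicate and, I expect, the main obstacle. A sequential execution would sum the individual depths $\sim 3((n-i)k+m)$ to give $\Theta(n^{2}k)$, which is far too large. Instead, I would exploit the fact that the construction behind Theorem~\ref{theorem:regswap} is a brick-wall of SWAP layers in which the qubits of the moving register enter the opposite block staggered by one layer at a time. This structural property lets $B_{i+1}$ start its own motion only a constant number of layers after $B_i$ has vacated its initial position, so all $n-1$ individual moves overlap in time. The longest single journey belongs to a qubit of $B_1$: it traverses all $(n-1)k$ A-qubits while the remaining $m-1$ qubits of $B_1$ flush behind it, giving SWAP-depth $\sim nk+m$ and thus CNOT-depth $\sim 3(nk+m)$. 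A fully symmetric schedule that instead moves each $A_i$ rightwards past a growing B-block gives $\sim 3(nm+k)$, so the stated $\sim 3\max(nk+m,nm+k)$ is a valid upper bound regardless of which direction is chosen.

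The technical hurdle is showing that the pipelined schedule is collision-free: one must verify that at every time step the SWAP gates drawn from distinct register-swap subroutines act on disjoint qubit pairs on the line. I would handle this by making the odd/even SWAP pattern from Theorem~\ref{theorem:regswap} explicit and proving by induction on $i$ that the moves of $B_{i+1}$ fit precisely into the trailing gap of width $m$ left behind $B_i$, since that gap is exactly large enough to host the next register's brick-wall pattern without sharing a physical qubit at any layer. Once this is verified, the claimed depth follows immediately from the length of the longest qubit trajectory.
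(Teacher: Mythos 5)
Your proposal is correct and follows essentially the same route as the paper's proof: the CNOT count comes from the identical summation over successive register swaps, and the depth bound from the same pipelined brick-wall schedule whose critical path is the last qubit of $\Hl_{B_1}$ (with the symmetric $A$-rightward schedule supplying the other branch of the max). The paper simply makes the case split $m\le k$ versus $m>k$ explicit to choose the direction of motion, and asserts the collision-freeness you propose to verify without giving further detail.
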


\begin{proof}
	Let us first count the number of gates used. To move each register
	$\Hl_{B_i}$ one by one, we use the procedure proposed in
	Theorem~\ref{theorem:regswap}. To move $\Hl_{B_i}$, we pass it through
	registers $ \Hl_{A_{i+1}} \cdots \Hl_{A_{n}}$ consisting of $(n-i)k$ qubits.
	Thus the total number of SWAPs is
	\begin{equation}
	\sum_{i=1}^n (n-i)km = km\frac{(n-1)n}{2} \sim \frac{1}{2}km n^2.
	\end{equation}
	
	Let us assume $m\leq k$. We follow the strategy in Theorem~\ref{theorem:regswap}
	by first interlacing qubits from $B$ registers with qubits from $A$ registers,
	then in parallel move qubits from $B$ registers to the left, and finally
	uninterlace them. The first qubit of $\Hl_{B_1}$ has to pass $(n-1)k$ qubits,
	so we need the same number of SWAP layers. Then we need $m-1$ layers of SWAPs to
	put the other elements of $\Hl_{B_1}$ in the correct position. By the
	assumption $m\leq k$, all qubits from the remaining $B$ registers are already in
	the correct position, so the total depth is $(n-1)k+m-1 \sim nk+m$ SWAP layers.
	
	Let us now consider $m> k$ case. This is a symmetric case to the one already
	considered; however, instead of moving $B$ registers to the left, we move $A$
	registers to the right. The depth in this case will be  $\sim nm+k$, and to
	sum up the depth will be 
	\begin{equation}
	\begin{cases}
	\sim nk+m, & m\leq k, \\
	\sim nm+k, & m> k,    
	\end{cases}
	\end{equation}
	which is equivalent to $\sim \max(nk+m, nm+k)$.
\end{proof}

For the swap network required to implement operations on all register pairs $\Hl_{A_i}\Hl_{B_j}$, we will need a strategy that guarantees the appearance of each register at each position. It is enough to apply the permutation $\pi$ from Lemma~\ref{lemma:qubit-all-to-all} more times than required in there.

\begin{lemma}\label{lemma:perm} Let $\pi$ be defined as in Lemma~\ref{lemma:qubit-all-to-all}.
	Then the following is true: after applying  $n$ times permutation $\pi$ on the sequence
	$(1,\dots,n)$, each number will appear at each position at least once.
\end{lemma}
\begin{proof}
	In order to satisfy the statement of the lemma, it is enough to show that $ \pi $ is 
	a cyclic permutation (with no fixed points). Indeed, suppose that there is an element in the list that appears only on a proper subset of the available positions. But then it forms its own, smaller cycle. Thus permutation $\pi$ cannot be itself a cyclic permutation.
	
	Now let us show that $\pi$ is a cyclic permutation. Equivalently, one can write
	$\pi$ using cycle notation for odd $n$ as
	\begin{equation}
	\pi = (1,3,5,\dots,2k-1, 2k+1, 2k,2k-2,\dots,4,2),
	\end{equation}
	and for even $n$ as
	\begin{equation}
	\pi = (1,3,5,\dots,2k-1, 2k,2k-2,\dots,4,2).
	\end{equation}
	In both cases, we have a cycle consisting of all elements of $\{1,\dots,n\}$. Hence $\pi$ is a cyclic permutation with no fixed points.
\end{proof}

\begin{theorem}[Inter-all-to-all strategy]\label{theorem:inter-all-to-all} Let
	$\Hl_{A_1}\Hl_{B_1}\cdots \Hl_{A_n}\Hl_{B_n}$ be registers, with $\Hl_{A_i}$
	having $k$ qubits and $\Hl_{B_j}$ having $m$ qubits each. Let $U_{A_i,B_j}$ be commuting operations 		
	on registers $A_i,B_j$ for $i,j
	\in [n]$. Then additional $ \sim n^2 ((m+k)\min(m,k)+km)$ SWAPs and $ n(2k+2m+\min(m,k)) $ depth are required to implement the operations $U_{A_i,B_j}$ for $i,j \in [n]$ on LNN.
	
\end{theorem}
\begin{proof}
	W.l.o.g. let us assume that $k\leq m$. Then we
	make the register reordering defined by permutation $\pi$ from
	Lemma~\ref{lemma:perm}, by fixing $\Hl_{A}$ registers and moving the $\Hl_{B}$
	registers. Each permutation transposition is equivalent to the reordering
	$\Hl_{B}\Hl_A \Hl_{B'}\mapsto\Hl_{B'} \Hl_A \Hl_{B}$, which can be implemented
	as
	\begin{equation}
	\Hl_{B} \Hl_A \Hl_{B'}\mapsto \Hl_A \Hl_{B'} \Hl_{B}\mapsto \Hl_{B'} \Hl_A \Hl_{B}.
	\end{equation}
	After applying permutation each time, we apply $U_{A,B}$ on the pair of
	consecutive registers $\Hl_A \Hl_B$ (in this order).

	Let us calculate the number of gates. We apply the $\pi$ permutation $n$ times, and each application requires $\sim n$ transpositions. Finally, each transposition can be implemented through two register SWAPs, which based on Theorem~\ref{theorem:regswap} gives us $(m+k)m+km$ SWAPs. So the total number of SWAPs is $ \sim n^2 ((m+k)m+km)$.
 
	Let us now calculate the depth. We need to add the depth equivalent to two
	transpositions times $n$. Each transposition gives us the depth
	$\sim (k+m+m) + (k+m)$. So the final depth is
	$\sim n(k+m+m) + (k+m))$.

  In the symmetric case $k \leq m$, we fix $\Hl_{B}$ registers and move the $\Hl_{A}$
	registers. The total number of SWAPs in this case is $\sim n^2(m+k)m + km$ and the depth is $\sim n(2k+2m+k)$. Hence, to summarize the number of SWAPs is given by 
\begin{equation}
	\begin{cases}
	\sim n^2(m+k)k + km, & k\leq m, \\
	\sim n^2(m+k)m + km, & k> m,    
	\end{cases}
	\end{equation}
 which is equivalent to $ \sim n^2 ((m+k)\min(m,k)+km)$. Similarly, the depth is given by
 \begin{equation}
	\begin{cases}
	\sim n(2k+2m+m), & k\leq m, \\
	\sim n(2k+2m+k), & k> m,    
	\end{cases}
	\end{equation}
which can be equivalently expressed as $n(2k+2m+\min(m,k))$.
\end{proof}

\begin{corollary} The circuit presented in Theorem~\ref{theorem:inter-all-to-all} requires $\sim 3n^2k^2$ SWAPs and has depth $\sim 5nk $ provided $k\sim m$.
	
\end{corollary}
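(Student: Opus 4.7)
The plan is to derive this corollary directly from Theorem~\ref{theorem:all-to-all} by specializing the gate count and depth expressions to the case $k=m$. No new structural argument is required; the work is purely arithmetic simplification of the bounds already established.

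First I would recall the two expressions from Theorem~\ref{theorem:all-to-all}: the gate count is $\sim 3n^2(3km+\min(m^2,k^2))+n^2 g_n$ and the depth is $\sim 3n(2k+2m+\min(m,k)) + nd_n$. Substituting $m=k$ into the gate count gives $3n^2(3k^2+\min(k^2,k^2))+n^2 g_n = 3n^2(3k^2+k^2)+n^2 g_n = 12n^2k^2+n^2 g_n$, matching the claimed bound. Similarly, the depth becomes $3n(2k+2k+\min(k,k))+nd_n = 3n(5k)+nd_n = 15nk+nd_n$.

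The only potential subtlety is making sure the symmetry assumption $m\leq k$ used in the proof of Theorem~\ref{theorem:all-to-all} (to decide whether to move the $B$ registers or the $A$ registers during the cyclic permutation step) is consistent with $k=m$; this holds trivially since both branches of the $\min$ give the same value when $k=m$, so the bounds coincide regardless of which convention is chosen. No obstacle is expected — this is essentially a one-line substitution, intended as a convenient reference form of the theorem for the balanced-register case used later in the paper.
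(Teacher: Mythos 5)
Your substitution is correct and matches what the paper intends: setting $k=m$ in the bounds of Theorem~\ref{theorem:all-to-all} gives $3n^2(3k^2+k^2)+n^2 g_n = 12n^2k^2+n^2 g_n$ gates and $3n(2k+2k+k)+nd_n = 15nk+nd_n$ depth. The paper treats this as an immediate consequence and gives no separate proof, so your one-line arithmetic derivation is exactly the intended argument.
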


\subsection{Practical examples on implementations for LNN}\label{sec:imp-lnn}
In this section, we will investigate how to use swap network strategies to implement Prog-QAOA for Max-$K$-Cut and TSP on LNN architecture.

To implement the program for computing the objective in Max-$K$-Cut problem, we need to compare all pairs of variables $(c_i,c_j)$. Given the sequence of variables $(c_1,\dots,c_n, \mathrm{flag}_1, \dots \mathrm{flag}_n)$, we start with applying the interlacing strategy to obtain $(c_1,\mathrm{flag}_1,\dots,c_n,\mathrm{flag}_n)$. Then we apply intra-all-to-all strategy so that the registers corresponding to the pair of variables $(c_i,\mathrm{flag}_i)$ and $(c_j,\mathrm{flag}_j)$ are neighboring at some point. We compare pair-wise neighbouring variables $(c_i,c_j)$, by first storing on $\mathrm{flag}_i$ the value of $c_i \neq c_j$, and then yielding value of $\mathrm{flag}_i$, and uncomputing value of $\mathrm{flag}_i$.  To revert back the original order of the variables, optionally, we undo the all-to-all network. Finally, we un-interlace the variables to get back the variables in the original order $(c_1,\dots,c_n, \mathrm{flag}_1, \dots \mathrm{flag}_n)$ or $(c_n,\dots,c_1, \mathrm{flag}_n, \dots \mathrm{flag}_1)$, depending whether we performed intra-all-to-all uncomputation or not. Note that interlacing strategy can be omitted if we already start with interlaced variables, depending also on the implementation of the mixer.

For TSP, to implement the algorithm for permutation checking, we start with variables in the order $(c_0,\dots, c_{n-1}, \mathrm{count}_0, \dots, \mathrm{count}_{n-1}, \mathrm{flag})$. Next, we interlace the variables $c_i$ and $\mathrm{count}_i$. Then we apply inter-all-to-all strategy and implement $\mathrm{count}_i \gets \mathrm{count}_i +1 \mod 2$, conditioned the neighboring $c_j$ is equal to $i$ . After this step, we uninterlace $count_i$ and $c_i$. Then we implement AND operation (multi-controlled NOT) with controls on $\mathrm{count}_i$ and target on flag. Finally, we return the value of flag and uncompute the whole part before returning flag.

To implement the objective value of TSP, we start with the ordering 
\begin{equation}
    (c_0,\dots, c_{n-1}, \mathrm{flag}_0, \mathrm{edge}_0,\dots, \mathrm{flag}_{n-1}, \mathrm{edge}_{n-1}).
\end{equation} 
Next, we interlace $c_i$ with the pair of variables $(\mathrm{flag}_j,\mathrm{edge}_j)$. Then we apply the inter-all-to-all strategy to set the value of edge$_i$ using $\mathrm{flag}_i$ as the flag. Note that in this step the value of edge$_i$ depends both on surrounding $c_j$ and $c_{j+1}$, however since both of them remain fix during the inter-all-to-all network, the operation of setting edge$_j$ remains local.  Afterwards, we un-interlace $c_i$ and $(\mathrm{flag}_j,\mathrm{edge}_j)$. For each pair, we first set the value of $\mathrm{flag}_j$ to $\mathrm{edge}_j = i$ for each $i\neq j$, and then we yield the value of $\mathrm{flag}_j$. This process can be implemented in parallel for all $j$. Finally, we uncompute all variables.

Note that the qubits used for storing count$_i$ can be reused in the objective value calculation for the auxiliary variables. Note that while naturally, we think in terms of the logical variables, at the level of the physical device, attachment of physical qubits to logical variables can be different from program to program. Furthermore, after each step, the auxiliary variables end at $\ket{0}$ (or at worst in a predetermined) state. It is important to note that the qubits corresponding to the main variables which are optimized should be in such a position that can be reused for the next program or mixer. The original placement of the physical qubits would be valid, although potentially it may be inefficient. Finally, clearly some improvements can lead to a severe decrease in the resources taken, especially in terms of the number of gates for example by avoiding unnecessary interlacing or using the relative-phase implementation of the permutation gates like iSWAPs. However, this type of improvement will not lead to a further decrease in complexity, therefore they were not considered here.

\subsection{Prog-QAOA for TSP on LNN} 

In this section, we describe the quantum circuit suitable for the LNN architecture that implements Algorithms~\ref{alg:permutation} and~\ref{alg:tsp-costroute}. Note that for all-to-all architecture, we can consider exactly the same implementation, but SWAPs are applied on the logical qubits. Derivation of the cost metrics and implementation details can be found in Appendix Sec.~\ref{sec:cost-tsp}.

We start in an equal superposition of all the valid cities for each register $\ket{b_t}$
\begin{equation}
\bigotimes_{t=0}^{n-1} \frac{1}{\sqrt n}\sum_{i=0}^{n-1}\ket i.
\end{equation}
This state can be efficiently prepared by adapting the procedure presented in~\cite{DaftWullie2022stack}. With such initial state preparation, we use the Grover mixer implementation presented in~\cite{bartschi2020grover}. 

Algorithm~\ref{alg:permutation} can be implemented as follows. First, we add $n$ 1-qubit ancilla to the end. Then, we interlace those $n$ qubits with $n$ registers $\ket{b_t}$ using the interlacing strategy. Next, we use the all-to-all strategy, and for each register $\ket{b_t}$ and ancilla qubit $\ket{c_j}$, we apply NOT on $\ket{c_j}$ if $\ket{b_t}$ is equal to $j$. This can be done by computing bit-wise XOR $\ket{b_t\oplus j}$, and then using the multi-controlled NOT gate, which applies NOT on $\ket{c_j}$ if all qubits in  $\ket{b_t\oplus j}$ are set to $\ket{0}$. After this step, bit-wise XOR is uncomputed, and the all-to-all
strategy continues. After the all-to-all strategy is completed, we undo the interlacing strategy so that all $\ket{c_j}$ are neighboring again. Note that at this moment, $\ket{c_j}$ is set to $\ket1$ if there is an odd number of $\ket{b_t}$ equal to $\ket{j}$. Therefore it is enough to check whether all $\ket{c_j}$ are equal to $\ket{1}$. This can be done by a multi-controlled NOT gate applying NOT on the extra target qubit $\ket{\rm flag}= \ket{0}$ if all $\ket{c_j}$ are set to $\ket1$. After this, we apply NOT and then $Z$-rotation on $\ket{\rm flag}$ (NOT is
required to add local phase to $\ket{0}$ which indicates incorrect permutation), and the whole computation except for the $Z$-rotation is uncomputed.

Algorithm~\ref{alg:tsp-costroute} can be implemented as follows. First we add $n$ 1-qubit registers $\ket{f_i}$ and $n$ $\lceil\log n \rceil$-qubit registers $\ket{{\rm edge}_i}$. We interlace $\ket{b_t}$ with newly created registers, so that we have $n$ consecutive registers in order $\ket{b_i,f_i,{\rm edge}_i}$. Then we apply the all-to-all strategy by swapping the $\ket{{\rm edge}_i}$ registers. Assuming the neighbouring registers are $\ket{b_t,f_t,{\rm edge}_i,b_{t+1},f_{t+1},{\rm edge}_j}$ we do the following. First we store on $\ket{f_t}$ the information whether $\ket{b_t}=\ket{i}$. This can be done using the same quantum circuit as for permutation verification. Then provided $\ket{f_t}=\ket{1}$ we make a copy from $\ket{b_{t+1}}$ to $\ket{{\rm edge}_i}$. Note that if $\{b_t\}$ forms a permutation, then such copy is applied once and $\ket{{\rm edge}_i}$. This can be done with
multi-controlled NOT gate controlled on $\ket{f_t}$ and qubits from $\ket{b_{t+1}}$, and target on qubits from $\ket{{\rm edge}_i}$. Then we uncompute the $\ket{f_t}$ estimation. Provided $\ket{b_t}$ forms a permutation, in each $\ket{{\rm edge}_i}$, we store the information about which city is visited next from city $i$. After the all-to-all strategy is completed, for each register $\ket{{\rm edge}_i}$ and all
cities $j\neq i$ we store the information whether $\ket{{\rm edge}_i}=\ket{j}$ on neighbouring flag qubit $\ket{f_t}$, and we apply $Z$ rotation proportional to $W_{i,j}$ on $\ket{f_t}$. Then we uncompute $\ket{f_t}$. After all $W_{i,j}$ are included, we undo the all-to-all strategy and interlacing.

Note that with the strategy above, we cannot include the information between $\ket{b_{t+1}}$ and $\ket{b_1}$. This can be easily solved by creating an additional register $\ket{b_1'}$, moving it next to $\ket{b_1}$, copying the value from $\ket{b_1}$ to $\ket{b_1'}$ with controlled NOTs and then moving it after the $\ket{b_{n-1},f_{n-1},{\rm edge}_{n-1}}$. This way, when $\ket{{\rm edge}_i}$ during the all-to-all strategy will be placed at the end, it will have access to necessary information from both $\ket{b_{n-1}}$ and $\ket{b_1}$.

If $\{b_t\}$ does not form a permutation or $\ket{b_t}$ does not represent a valid city (like in the case of $X$-mixer), a copy from $\ket{b_{t+1}}$ to
$\ket{{\rm edge}_i}$ may take place more or less than once. In such a case, $\ket{{\rm edge}_i}$ may point at an incorrect or even a non-existing city. Taking into account how the edge cost is included, it can either result in not adding the cost at all or adding an incorrect one. Therefore, for $\{b_t\}$ not being a permutation, the energy from the cost route may range from $0$ (if for example, all $b_t$ represents an invalid city) to $n\max_{i\neq j}W_{i,j}$ (if the route moves back and forth along the most costly edge). This effect can be suppressed by an
appropriate choice of penalty value $A$ for permutation verification.

\section{Cost analysis for Max-$K$-Cut and TSP}

\subsection{Max-$K$-Cut}\label{sec:cost-maxcut}
For the analysis, we assume the graph $G=(V,E)$ is complete and weighted with $n=\#V$ nodes. The same results apply when the number of edges is $\Omega(n^2)$. We assume $K< n$, which is necessary to construct nontrivial problem instances. 
We assume $W:E\to \RR_{\geq 0}$ is the function of weights. We assume $\|W\|_\infty =\Theta(1)$. For simplicity, we will write $W_{i,j} \coloneqq W(\{i,j\})$.

Some of the Hamiltonians require choosing a penalty parameter $A$. We assume that $A=\Theta(1)$.

\subsubsection{X-QAOA}\label{sec:app-maxcut-qaoa}
The QUBO takes the form~\cite{tabi2020quantum}
\begin{equation}
A \sum_{i=1}^n \bigl( 1 - \sum_{k=1}^K b_{i,k} \bigr)^2 + \sum_{\substack{i,j=1\\i<j}}^n  \sum_{k=1}^K W_{i,j} b_{i,k} b_{j,k}.
\end{equation}
We start in a superposition of all states, and the mixer is $X$-rotations applied on all qubits.

The number of qubits (spins) and the effective space size are $nK$. The number of 2-local Pauli terms, and thus gates and parameterized gates is at most $n\binom{K}{2}+  \binom{n}{2} K = \order{n^2K}$. Note that 1-local Pauli terms are insignificant for this analysis. For depth implementation note that the first part can be implemented with depth $\order{K}$, because for each $i$ one can implement $\bigl( 1 - \sum_{k=1}^K b_{i,k} \bigr)^2$ independently, and each such term can be implemented in $\order{K}$ \cite{o2019generalized}. Similarly one can implement the objective part, as for each fixed $k$, the part $\sum_{\substack{i,j=1\\i\neq j}}^n W_{i,j} b_{i,k} b_{j,k}$ can be implemented in $\order{n}$ depth. Thus the total depth is $\order{n+K} = \order{n}$. Note that the depth estimation is tight given the number of qubits and gates.

On LNN, one can implement the cost Hamiltonian above using the swap network \cite{o2019generalized} which results in $\order{nK}$ depth. To show this is optimal, we will use the Theorem~\ref{theorem:depth-lowerbound-lnn-constant-radius}. It is enough to show that any pair of spins can be connected with a path of length at most 2 on the graph representing the QUBO (2-local cost Hamiltonian) interactions. Let $(s_{i,k},s_{i',k'})$ be pair of different spins. If $i=i'$ or $k=k'$, then spins are connected by an edge, respectively thanks to the constraint part and the objective part. Otherwise, the path is $s_{i,k} \to s_{i,k'} \to s_{i',k'}$.

For the energy span, note that we have 
\begin{equation}
\begin{split}
\max_b \left(A \sum_{i=1}^n \bigl( 1 - \sum_{k=1}^K b_{i,k} \bigr)^2 + \sum_{\substack{i,j=1\\i< j}}^n W_{i,j} \sum_{k=1}^K b_{i,k} b_{j,k}\right) 
&\leq A n (K-1)^2 + \sum_{\substack{i,j=1\\i< j}}^n W_{i,j} K \\
&\leq AnK^2 + \binom{n}{2}K \|W\|_\infty,
\end{split}
\end{equation}
which ultimately gives us the upper bound $\order{n^2K}$ for the maximum energy. This is tight which can be shown by taking $b_{i,k}\equiv 1$. Note that the minimal energy is positive and at most $n\|W\|_\infty=\order{n}$ which can be shown by fixing all colors to 1. This shows the energy span is $\order{n^2 K-n} =\order{n^2K}$, which is tight.

The initial state preparation and mixer consists of single-qubit gates only; thus, their costs are negligible.

\subsubsection{XY-QAOA}\label{sec:xy-qaoa-maxkcut}
The cost Hamiltonian takes the form 
\begin{equation}
\sum_{\substack{i,j=1\\i<j}}^n W_{i,j} \sum_{k=1}^K b_{i,k} b_{j,k}.
\end{equation}
We start in a product of $n$, $K$-qubit $W$-state, and the mixer is the partitioned ring or complete Hamiltonian consisting of $X_iX_j + Y_iY_j$ applied on each register $\{b_{i,1},\dots,b_{i,K}\}$~\cite{wang2020x}.

The number of qubits (spins) is $nK$, and the effective space size is $\log(K^n) = n\log K$. The number of Pauli terms, and thus gates and parameterized gates is at most $ \binom{n}{2} K = \order{n^2K}$. The depth equals $\order{n}$ taking the approach presented in Sec.~\ref{sec:app-maxcut-qaoa}. Note that the estimation is tight given the number of qubits and gates.

Let us consider the depth on the LNN architecture. Let as assume that $XY$-mixers and $ZZ$ interactions can be implemented in any order for a fixed QAOA level. Taking into account the number of gates and qubits, the lower bound on the minimal depth is $\order{n^2K/(nK)} = \order{n}$. One can use the following strategy, which achieves only $\order{nK}$ depth. We start in order $b_{1,1},b_{2,1},\dots,b_{n,1},b_{1,2},\dots$, implement the objective Hamiltonian using \cite{o2019generalized} on $\{b_{1,k},\dots,b_{n,k}\}$ for each $k=1,\dots, K$. Then we re-order the qubits, so that we result in order $b_{1,1},b_{1,2},\dots,b_{1,n},b_{2,1},\dots$, and apply $XY$-mixer in depth $\order{1}$. The reordering requires $\order{nK}$ depth~\cite{saadeh2019performance} which dominates the total depth.

One can show that
\begin{equation}
\sum_{\substack{i,j=1\\i<j}}^n  \sum_{k=1}^K W_{i,j} b_{i,k} b_{j,k} \leq \sum_{\substack{i,j=1\\i<j}}^n W_{i,j}  =\order{n^2},
\end{equation}
which is achieved by choosing the same color for all nodes.

The initial state preparation can be implemented in $\order{K}$ depth even on the LNN network \cite{bartschi2019deterministic}, which is negligible compared to other parts of the circuit. The mixer cost is negligible as shown above.

\subsubsection{HOBO} \label{sec:maxcut-hobo}
The Hamiltonian takes the form~\cite{tabi2020quantum}
\begin{equation}
\begin{split}
H(b) &= A\sum_{i=1}^{n} H_{\rm valid} (b_i) + \sum_{\substack{i,j=1\\i> j}}^{n} W_{i,j} H_{\delta}(b_i,b_j),
\end{split}
\end{equation}
where $b_i$ are registers consisting of $b_{i,1},\dots,b_{i,\lceil \log K\rceil }$, encoding colors $\{0,1\dots,K-1\}$ in binary. The first term introduces a penalty if $b_i$ encodes some color $\geq K$ and $H_{\rm valid}$ and $H_\delta$ are implemented as in \cite{glos2020space}. The initial state and mixer are the same as in Sec.~\ref{sec:app-maxcut-qaoa}.

The number of qubits and the size of the effective space are $\order{n \log K}$. $H_{\rm valid}$ consists of at most $2^{\lceil \log K\rceil}=\order{K}$ Pauli terms, as the Hamiltonian is defined over $\lceil \log K\rceil$ qubits. Each such Pauli operator can be implemented using $\sim 2\log K$ gates using the decomposition from \cite{seeley2012bravyi}. For the objective Hamiltonian, using the Gray code approach from \cite{glos2020space}, each Pauli term can be applied using 2 CNOT gates and one parameterized gate, and we have $2^{\lceil \log K\rceil}$ Pauli terms~\cite{glos2020space,campbell2021qaoa}. So the total number of gates is at most $\order{nK\log K + \binom{n}{2} K} = \order{n^2 K}$. The number of parameterized gates is the same.

Let us consider the depth cost. $H_{\rm valid}$ can be implemented in parallel for each $b_i$ register, and can be implemented in depth $\order{K \log K}$, assuming we apply each gate one by one. For LNN architecture, since all qubits are at LNN-distance $\lceil \log K\rceil$, we can implement $H_{\rm valid}$ in $\order{K \log^2 K}$ depth. For the objective part for all-to-all architecture, we apply the swap-network order~\cite{o2019generalized} on the registers, which results in $\order{n}$ depth times the depth of $H_\delta(b_i,b_j)$. $H_\delta$ can be implemented in $\order{K}$ depth~\cite{glos2020space} which results in $\order{nK}$ depth overall. For LNN architecture, we need extra $\order{\log K}$ for swapping a pair of registers in the swap network and $\order{\log K}$ to make interacting qubits neighboring while implementing each CNOT in the $H_\delta$. This gives depth $\order{n(\log K +K \log K)} = \order{nK\log K}$ for the objective Hamiltonian, and the same for the whole Hamiltonian. Note that the estimations for both architectures are almost tight, given the number of qubits and gates.

For the energy span, we can see that
\begin{equation}
\begin{split}
H(b) &= A\sum_{i=1}^{n} H_{\rm valid} (b_i) + \sum_{\substack{i,j=1\\i> j}}^{n} W_{i,j} H_{\delta}(b_i,b_j) \leq An \lceil \log K\rceil + \binom{n}{2} \|W\|_\infty = \order{n^2},
\end{split}
\end{equation}
where the estimations for $H_{\rm valid}$ and $H_\delta$ come from  \cite{glos2020space}. This can be achieved by choosing the same color for all nodes. 

The initial state preparation and mixer consist of single-qubit gates only; thus, their costs are negligible.

\subsubsection{Fuchs-QAOA} \label{sec:maxcut-orig-sim-qaoa}

\begin{figure}\centering 
	\includegraphics[scale=.9]{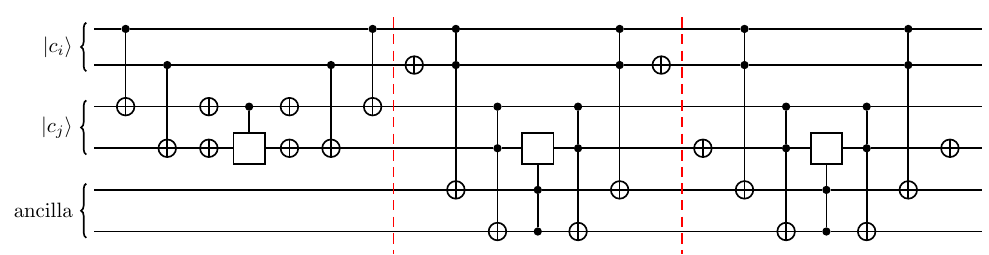}
	\caption{\label{fig:fuchs} Implementation of Fuchs-QAOA on registers $\ket{c_i}$ and $\ket{c_j}$ for $K=3$ colors. Note that the first part of the circuit penalizes the state if both colors are the same. The second step is the fixing part for the color combination $\ket{2,3}$, and the third step is the fixing part for the color combination $\ket{3,2}$. The last two parts are required as both $2$ and $3$ denote the third color according to Fuchs-QAOA~\cite{fuchs2021efficient}. }
\end{figure}

In the Fuchs-QAOA~\cite{fuchs2021efficient}, the Hamiltonian is implemented as in Fig.~\ref{fig:fuchs}. We have $n$ registers, each having $\lceil \log K\rceil$ qubits, with the same interpretation as in Sec.~\ref{sec:maxcut-hobo} for $K$ being a power of 2. If $K$ is not a power of 2, then the remaining colors also denote the $K$-th color. The procedure goes as follows: first, the objective function is implemented assuming all $2^{\lceil \log K \rceil}$ colors are different, which we call \emph{the objective phase}. Then the undeserved cases, where integers denote the same $K$-th color, are fixed, which we call \emph{the fixing phase}. This correction is particularly harmful when $K=2^l+1$ for some integer $l$, which is the worst-case scenario. The initial state and mixer are the same as in Sec.~\ref{sec:app-maxcut-qaoa} and Sec.~\ref{sec:maxcut-hobo}.

The number of qubits required and the effective space size is $\order{n\log K}$. For each edge, the objective phase uses $\order{\log K}$ CNOTs and 1-qubit gates and a single $\sim \log K$-controlled $Z$ rotation. The controlled $Z$-rotation can be implemented with the decomposition presented in Lemma 7.11 in \cite{barenco1995elementary}, where each Toffoli is implemented using the ancilla implementation~\cite{he2017decompositions} using $\order{\log K}$ gates. Note that this implementation uses only $\sim \log K$ ancilla qubits; hence does not change the previous estimations. So for the whole objective phase, we need $\order{\binom{n}{2}\log K} = \order{n^2 \log K}$ gates and $\order{n^2}$ parameterized gates. For each edge, the fixing phase requires $2\binom{2^{\lceil \log K\rceil} - (K-1)}{2} $ fixing circuits, which in the worst-case scenario grows like $\Theta(\binom{K}{2}) = \Theta(K^2)$. For every single fixing circuit, we need 2-controlled rotation, $\order{\log K}$ 1-qubit gates, and 4 $\log K$-controlled NOTs, for which we can use the same implementation as before. This, in total, will give $\order{n^2K^2\log K}$ gates for the whole fixing phase in the worst-case scenario, which dominates the cost. Let us note that when $K$ is a power of two, the fixing circuits are not implemented, and the whole implementation is much more efficient.

Finally, using the swap network order from \cite{o2019generalized}, one can show the objective phase can be implemented in $\order{n \log \log K }$ depth, and the fixing phase can be implemented in depth $\order{nK^2\log\log K}$, as fixing circuits need to be implemented one by one. This is tight given the number of qubits and gates.

On LNN architecture, we use the swap network from \cite{o2019generalized} to swap registers for different vertices, which costs $\order{n\log K}$ taking into account only the cost of swaps. For each edge, the objective phase proceeds as follows: first, we interlace qubits as in Theorem~\ref{theorem:reginterlace} in depth $\order{\log K}$ and apply CNOT gates. Then we uninterlace qubits with the same depth. The rest of the objective phase is implemented as previously, but we choose the no-ancilla implementation from~\cite{saeedi2013linear,da2022linear} requiring $\order{\log K}$ depth. So, each edge of the objective phase can be implemented in $\order{\log K}$ depth. Each fixing circuit is implemented by interlacing two ancilla qubits with the registers, applying Toffoli gate as in~\cite{saeedi2013linear,da2022linear}, and uninterlace, which produces $\order{\log K}$ depth. Then 2-controlled rotation can be implemented in $\order{1}$. The whole circuit has depth $\order{nK^2\log K}$.

For the energy span, observe that all samples from the Hamiltonian are valid solutions, but only some of them appear more often. Thus the energy difference is the same as in Sec.~\ref{sec:xy-qaoa-maxkcut}, which is $\order{n^2}$.

The initial state preparation and mixer consists of single-qubits gates only; thus, their cost is negligible.

\subsubsection{Prog-QAOA}\label{sec:sim-maxkcut} 

The Prog-QAOA is described in the main article in Methods. The implementation essentially leads to the same circuit as for the Fuchs-QAOA from Sec.~\ref{sec:maxcut-orig-sim-qaoa}, except for the initial state preparation and the mixer, plus there is no need to implement \emph{the fixing phase}. Thus the quality measures can be derived from Sec.~\ref{sec:maxcut-orig-sim-qaoa}. 

The number of required qubits is $\order{n \log K}$. The effective space size is $\log(K^n) = n\log K$. The number of gates is $\order{n^2 \log K}$ The number of parameterized gates is $\order{n^2}$. The depth is $\order{n \log \log K}$ for all-to-all architecture and $\order{n\log K}$ for LNN connectivity. The energy span is $\order{n^2}$.

The initial state is the product of $n$ quantum states $\frac{1}{\sqrt K} \sum_{k=0}^{K-1} \ket k$. Each such superposition will be implemented independently; so let us focus on generating only a single such register. The mentioned state can be prepared by adjusting the procedure taken from \cite{DaftWullie2022stack}, which is visualized for the case $n=14$ in Fig.~\ref{fig:maxkcut-inital-state}. Let $m\coloneqq \lceil \log K\rceil$. The procedure goes as follows: First, we change the qubit $\ket{0}$ into $\sqrt{2^{m-1}/K}\ket 0 +\sqrt{(K-2^{m-1})/K}\ket 1$ through appropriate $Y$ rotation. Then we create a uniform superposition of the states of the form $\ket{0?\cdots?}$ by applying a controlled Hadamard if the state of the first qubit is $\ket{0}$. Hence, we only need to correct the amplitude for basic states starting with $\ket{1}$. We do this by a similar controlled rotation on the next qubit, corresponding to the next digit, which can be $0$ or $1$ in the binary representation of numbers below $K$, with the already fixed prefix. For example, in Fig.~\ref{fig:maxkcut-inital-state} in the second block, we rotate qubit $main_1$, however for $11=1011_2$ it would be the third qubit, as there is no number smaller than $11$ which has the form $11??_2$. This procedure is repeated at most $\lceil\log K\rceil$ times, as in each case, we rotate a different qubit. The procedure stops when the number of left basic states is of the form $2^l$. If it is 1, then nothing is done. Otherwise, we apply controlled Hadamards with the target on the last $l$ qubits and controls on all so far rotated qubits.

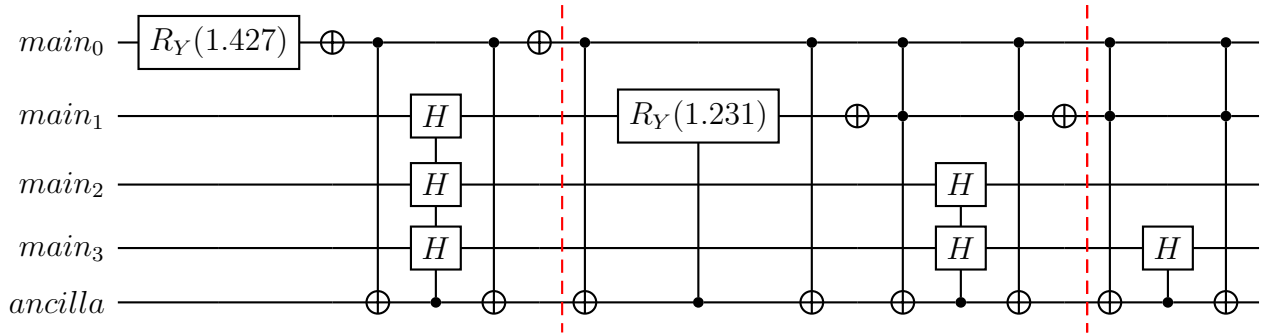
\begin{figure}\centering
	
	\begin{quantikz}[row sep=1.5ex, column sep=1.5ex]
		\lstick{$main_0$} & \gate{R_Y(1.427)} & \targ{} & \ctrl{4} & \qw & \ctrl{4} & \targ{} \slice{} & \ctrl{4} & \qw & \ctrl{4} & \qw & \ctrl{1} & \qw & \ctrl{1} & \qw \slice{} \slice{}  & \ctrl{1} & \qw & \ctrl{1}& \qw \\
		\lstick{$main_1$} & \qw & \qw & \qw & \gate H & \qw & \qw & \qw & \gate{R_Y(1.231)} & \qw & \targ{} & \ctrl{3} & \qw & \ctrl{3} & \targ{} & \ctrl{3} & \qw & \ctrl{3} & \qw \\
		\lstick{$main_2$} & \qw & \qw & \qw & \gate H & \qw & \qw & \qw & \qw & \qw & \qw & \qw  & \gate H & \qw & \qw & \qw & \qw & \qw & \qw \\
		\lstick{$main_3$} & \qw & \qw & \qw & \gate H & \qw & \qw & \qw & \qw & \qw & \qw & \qw  & \gate H & \qw & \qw & \qw & \gate H & \qw & \qw \\
		\lstick{$ancilla$} & \qw & \qw & \targ{} & \ctrl{-3} & \targ{} & \qw & \targ{} & \ctrl{-3} & \targ{} & \qw & \targ{} & \ctrl{-2} & \targ{} & \qw & \targ{} & \ctrl{-1} & \targ{} & \qw 
	\end{quantikz}
	\caption{\label{fig:maxkcut-inital-state} Quantum circuit preparing the state $\frac{1}{\sqrt K}
		\sum_{k=0}^{K-1} \ket k$ for $ K=14 $.} 
\end{figure}

The whole procedure consists of at most $\lceil \log K\rceil$ steps, each having a constant number of (perhaps multi-controlled) gates. Multi-controlled NOTs can be implemented in depth $\order{\log K}$ and $\order{\log^2 K}$ gates even on LNN architecture~\cite{saeedi2013linear,da2022linear}. The additional cost for LNN architecture arises from moving the ancilla qubits, which produces $\order{\log K}$ cost for each iteration. Furthermore, reordering the control qubits so that they are neighboring may also require $\order{\log K}$ depth (we assume that the bottom-most controlled qubit remains at its place). Thus the total depth is $\order{\log^2 K}$ for both all-to-all and LNN architecture, and the number of gates is $\order{n \log^3 K}$. This cost is negligible compared to the objective implementation.

For the mixer, we use the Grover mixer Hamiltonian presented in~\cite{bartschi2020grover}. Let $U_S$ be the unitary that prepares the initial state as before. The mixer consists of implementing $U_S^\dagger$, the layer of NOT gates on all qubits, multi-controlled $Z$-rotation, another layer of NOT gates, and $U_S$. Based on Theorem 2 from~\cite{bartschi2020grover}, the number of gates and depth, even on LNN, are equal to the cost of implementing $U_S$ plus $\order{n\log K}$, which is $\order{n\log K}$ for depth and $\order{n\log^3 K}$ for the number of gates in total. The number of parameterized gates is $\order{n\log n}$~\cite{bartschi2020grover}.

\subsection{TSP}\label{sec:cost-tsp} We will consider the TSP problem defined over $n$ cities. Note
that in most cases, we can reduce the problem into a problem over $n-1$ cities by fixing that the first city is visited at time $1$. This simplification does not have a significant impact on the estimation; thus, it is omitted.

$W$ is a function describing the cost between cities. For simplicity we will write $W_{i,j} \coloneqq W(\{i,j\})$. We assume $\|W\|_\infty = \Theta(1)$. Some of the Hamiltonians require choosing a penalty parameter $A$. Some of the Hamiltonians require choosing penalty parameters $A_i$. We assume that
$A_i=\Theta(1)$.

\subsubsection{X-QAOA} The in-depth analysis of X-QAOA can be found
in~\cite{glos2020space}; here, we only recall these results. The Hamiltonian takes the form
\begin{equation}
A_1 \sum_{t={0}}^{{n-1}} \left( 1 -\sum_{i={0}}^{{n-1}} b_{t,i} \right)^{\!\!2} + A_2 \sum_{i={0}}^{{n-1}} \left( 1 -\sum_{t={0}}^{{n-1}} b_{t,i} \right)^{\!\!2} \\
+ \sum_{\substack{i,j={0}\\i\neq j}}^{{n-1}} W_{i,j} \sum_{t={0}}^{{n-1}} b_{t,i}b_{t+1,j}.
\end{equation}
The initial state is a superposition of all states, and the mixer is $X$-rotation applied on all qubits. The number of qubits and the effective space size is $\order{n^2}$. The number of 2-local $ZZ$ terms were estimated to be $\order{n^3}$~\cite{glos2020space}, and since each term requires 3 gates and one parameterized gate, the number of (parameterized) gates is $\order{n^3}$. The depth on all-to-all architecture is $\order{n}$, which is optimal given the number of gates and qubits. There exists a circuit that implements the Hamiltonian in depth $\order{n^2}$ on LNN, based on the results from \cite{o2019generalized}. At the same time, $\order{n^2}$ depth is required in light of Theorem~\ref{theorem:depth-lowerbound-lnn-constant-radius} because the QUBO graph has radius 2, thanks to the constraint Hamiltonians. The maximal energy attained is $\order{n^3}$, which is tight.

The initial state preparation and mixer consist of single-qubit gates only; thus, their costs are negligible.

\subsubsection{XY-QAOA}
The Hamiltonian takes the form
\begin{equation}
A \sum_{i={0}}^{{n-1}} \left( 1 -\sum_{t={0}}^{{n-1}} b_{t,i} \right)^{\!\!2} \\
+ \sum_{\substack{i,j={0}\\i\neq j}}^{{n-1}} W_{i,j} \sum_{t={0}}^{{n-1}} b_{t,i}b_{t+1,j}.
\end{equation}
The initial state and mixer are as in Sec.~\ref{sec:xy-qaoa-maxkcut}. 

The number of qubits, gates, and parameterized gates is the same as for X-QAOA, following very similar reasoning as in~\cite{glos2020space}. The effective space is $\log(n^n)=n\log n$. Same as for the X-QAOA, the depth on all-to-all connectivity is $O(n)$ following the same procedure as in \cite{glos2020space}. The depth on LNN is at most $\order{n^2}$ thanks to the swap network~\cite{o2019generalized}. By using Theorem~\ref{theorem:depth-lowerbound-lnn-constant-radius} we can show $\order{n^2}$ is tight: given two bits $b_{t,i}$, $b_{t',i'}$, one can find a finite path $b_{t,i} \to b_{t',i} \to b_{t'+1,j} \to b_{t',i'}$, where $j\not \in \{i,i'\}$, where the first arc comes from the constraint, and the last two arcs come from the objective part.

The maximal energy is at most $\order{n^2}$. To show this, let us first assume that there are $n_0$ indices $i$ for which $\sum_{t={0}}^{{n-1}} b_{t,i} -1 < 0 $, which is equivalent to saying $\sum_{t={0}}^{{n-1}} b_{t,i} -1  = -1$. W.l.o.g., let us assume that these indices are $i=0,\dots,n_0-1$.  Then we have 
\begin{equation}
\begin{split}
A \sum_{i={0}}^{{n-1}} \left( \sum_{t={0}}^{{n-1}} b_{t,i} -1 \right)^{\!\!2} 
&= A \sum_{i={0}}^{n_0-1} \left( \sum_{t={0}}^{{n-1}} b_{t,i} -1 \right)^{\!\!2} +A \sum_{i=n_0}^{n-1} \left( \sum_{t={0}}^{{n-1}} b_{t,i} -1 \right)^{\!\!2}\\
&\leq A \sum_{i={0}}^{n_0-1} 1 +A \left( \sum_{i=n_0}^{n-1} \sum_{t={0}}^{{n-1}} b_{t,i} -(n-n_0) \right)^{\!\!2}\\
&\leq A n_0 +A (n-n_0)n \leq 2An^2,\\
\end{split} 
\end{equation}
where in the second line we used the fact that the sum of squares of nonnegative numbers is not larger than the square of their sum.
The objective part can be upper-bounded as 
\begin{equation}
\begin{split}
\sum_{\substack{i,j={0}\\i\neq j}}^{{n-1}} W_{i,j} \sum_{t={0}}^{{n-1}}  b_{t,i}b_{t+1,j} &\leq  \|W\|_\infty \sum_{t={0}}^{{n-1}}  \sum_{\substack{i,j={0}\\i\neq j}}^{{n-1}}    b_{t,i}b_{t+1,j} =  \|W\|_\infty \sum_{t=0}^{n-1} 1 = \|W\|_\infty n.
\end{split} 
\end{equation}
This shows that the maximal energy is at most $\order{n^2}$. This can be achieved by choosing $b_{t,0}\equiv1$ and 0 for other binary variables.

Following the reasoning from Sec.~\ref{sec:xy-qaoa-maxkcut}, the costs of the initial state and the mixer are negligible.

\subsubsection{GM-QAOA}
The Hamiltonian takes the form
\begin{equation}
\sum_{\substack{i,j={0}\\i\neq j}}^{{n-1}} W_{i,j} \sum_{t={0}}^{{n-1}} b_{t,i}b_{t+1,j},
\end{equation}
and the mixer is assumed to be the Grover-mixer, with the initial state being the superposition of permutations as in~\cite{bartschi2020grover}.

The number of qubits, gates, and parameterized gates is the same as for X-QAOA, following very similar reasoning. The effective space is $\log(n!)=\Theta(n\log n)$ thanks to the Stirling formula. While the objective Hamiltonian can be implemented in depth $\order{n}$, the depth $\order{n^2}$ is required for the mixer, which is true also for LNN~\cite{bartschi2020grover}. We were not able to prove the optimality of the result for depth. However, for all-to-all architecture, one would need to find a better implementation of finding the superposition of permutations, while for the LNN network, one would additionally need to find a better implementation of $\sim n^2$-controlled $R_Z$ operation. The latter seems particularly unlikely, as one would need to distill information from qubits located at distance $\sim n^2$. 

The maximal energy attained is $\order{n}$ as the evolution is taking place only within the space of all routes, thus the maximal possible energy is at most $n\|W\|_\infty = \order{n}$.

\subsubsection{HOBO}
The Hamiltonian takes a general form
\begin{equation}\label{eq:general-tsp-hobo}
\begin{split}
H(b) &= A_1\sum_{t=0}^{n-1} H_{\rm valid} (b_t) + A_2\sum_{t=0}^{n-1} \sum_{t'=t+1}^{n-1} H_{\neq}(b_t,b_{t'}) + \sum_{\substack{i,j=0\\i\neq j}}^{n-1} W_{i,j} \sum_{t=0}^{n-1}
H_\delta (b_t,i)H_\delta (b_{t+1},j).
\end{split}
\end{equation}
Details of the implementations and detailed analysis of Hamiltonians $H_{\rm valid}$, $H_{\neq}$ and $H_\delta$ can be found in \cite{glos2020space}. The initial state is the superposition of all states, and the mixer is $X$-rotations applied on all qubits. 

The number of qubits and the effective space size is $\order{n \log n}$. Despite having higher-order terms, the number of gates and parameterized gates is equal to the number of Pauli terms which is $O(n^3)$ following the Gray code procedure~\cite{glos2020space}. We would like to point out that both $H_{\neq}$ and $H_\delta$ are greatly simplified when transformed into the cost Hamiltonian, which results in dropping a number of polynomial terms $\order{n^4}$ in the HOBO above into the number of Pauli terms $\order{n^3}$ as shown in~\cite{glos2020space,campbell2021qaoa}. The depth is $\order{n^2}$ for all-to-all architecture~\cite{glos2020space}, and using the generalized swap network \cite{o2019generalized} on registers $\{b_{t,i}:i\in [n]\}$, one can implement this with the depth $\order{n^2 \log^2 n}$ on LNN, applying similar fixes as in Sec.~\ref{sec:maxcut-hobo}. The maximal energy is $\order{n^2}$, which is tight~\cite{glos2020space}.

The initial state preparation and mixer consist of single-qubit gates only; thus, their costs are negligible.

\subsubsection{ILP}

We use the Miller-Tucker-Zemlin formulation~\cite{miller1960integer} defined as
\begin{align}
\min &\sum_{i=0}^{n-1} \sum_{\substack{j=0\\j\ne i}}^{n-1}W_{i,j}b_{i,j}\colon &&  \\
& \sum_{\substack{i=0\\i\ne j}}^{n-1} b_{i,j} = 1 && j=0, \ldots, n-1; \\
& \sum_{\substack{j=0\\j\ne i}}^{n-1} b_{i,j} = 1 && i=0, \ldots, n-1; \\
& u_i-u_j +nb_{i,j} \le n-1 && 1 \le i \ne j \le n-1;  \label{eq:app-ilp-ineq}
\end{align}
where $b_{i,j}$ is a binary variable that is equal to 1 iff we move from city $i$ to city $j$, and $u_i\in \{1,\dots,n-1\}$ are dummy variables. One can use the typical ways of transforming such ILP into QUBO~\cite{lucas2014ising,salehi2022unconstrained}, by first introducing the slack variables $\xi_{i,j}\in\{0,\dots,2n-3\}$ for each inequality in Eq.~\eqref{eq:app-ilp-ineq}, each using $\lceil \log (2n-3)\rceil \sim \log n$ bits, then moving the linear equality constraints into the objective function. Integer variables $u_i$ also require $\sim \log n$ qubits each. The final formula (we skip transforming variables into the combination of binaries for clarity) is
\begin{equation}
\sum_{i=0}^{n-1} \sum_{\substack{j\ne i\\j=0}}^{n-1}W_{i,j}b_{i,j} + A_1 \sum_{j=0}^{n-1} \bigg(\sum_{\substack{i=0\\i\ne j}}^{n-1} b_{i,j} - 1\bigg)^2 + A_2 \sum_{i=0}^{n-1} \bigg(\sum_{\substack{j=0\\j\ne i}}^{n-1} b_{i,j} - 1\bigg)^2 + A_3 \sum_{\substack{i,j=1\\i\neq j}}^{n-1} \left(u_i-u_j +nb_{i,j} - n+1 + \xi _{i,j} \right)^2.\label{eq:tsp-ilp-qubo}
\end{equation}
We use the same initial state and mixer as in Sec.~\ref{sec:app-maxcut-qaoa}.

The number of qubits is $\sim 2n^2 \log n$, which is the same as the effective space size. The number of 2-local Pauli terms is $\order{n^3}$ for both second and third constraints in Eq.~\eqref{eq:tsp-ilp-qubo}, and for the fourth it is $\binom{n-1}{2}\order{\binom{\log n}{2}} = \order{n^2 \log^2 n}$. Note that some Pauli terms were counted multiple times for the fourth constraint; however, the count is still much smaller than $\order{n^3}$ from the second and the third constraints.

Let us first consider the depth on the all-to-all architecture. The first part of Eq.~\eqref{eq:tsp-ilp-qubo} can be implemented in depth $1$. Second, using the order as in the swap network within each register $\{b_{i,j}:j=0,\dots,n-1, j\ne i\}$ for fixed $i$, the second part can be implemented in depth $\order{n}$. The third constraint is symmetric to the second, thus it can also be implemented in depth $\order{n}$.
For the fourth, the only variables shared between constraints are the dummy $u_{i}$ variables. Thus we choose the swap network order of interacting $u_i$ and $u_j$, and in addition, when interactions between $u_i$ and $u_j$ are implemented, we add remaining interactions with and within $b_{i,j}$ and $\xi_{i,j}$. Since each of the constraints is a full QUBO defined over $\order{\log n}$ qubits, in total, we will need $\order{n \log n}$ depth. Thus the total depth is $\order{n \log n}$, which is tight given the number of qubits and the number of gates to be implemented.

For LNN, we can use the swap network presented in~\cite{o2019generalized} to
achieve $\order{n^2\log n}$ depth. In fact, this is almost optimal by
Theorem~\ref{theorem:depth-lowerbound-lnn-constant-radius}, as the distance
between any pair of bits is $\order{1}$. The distance between bits $b_{i,j}$ and
$b_{k,l}$ is at most 2 because of the second and the third type of constraints.
The distance between $u$- or $\xi$- and any $b_{k,l}$ is again constant, as for
any $u$- or $\xi$-generated bit there is a neighboring $b_{i,j}$ bit. 

For the energy span, let us see that 
\begin{multline}
\sum_{i=0}^{n-1} \sum_{\substack{j\ne i\\j=0}}^{n-1}W_{i,j}b_{i,j} + A_1 \sum_{j=0}^{n-1} \bigg(\sum_{\substack{i=0\\i\ne j}}^{n-1} b_{i,j} - 1\bigg)^2 + A_2 \sum_{i=0}^{n-1} \bigg(\sum_{\substack{j=0\\j\ne i}}^{n-1} b_{i,j} - 1\bigg)^2 + A_3 \sum_{\substack{i,j=1\\i\neq j}}^{n-1} \left(u_i-u_j +nb_{i,j} - n+1 + \xi _{i,j} \right)^2  \\
\leq n^2 \|W\|_\infty + A_1n\cdot n^2 + A_2n\cdot n^2 + A_3 n^2 \cdot n^2 = \order{n^4}.
\end{multline}
Note that $\order{n^4}$ is achievable by the assignment $b_{i,j}\equiv 0$, $u_i\equiv 1$ and $\xi_{i,j} \equiv 2n-2$, hence our derivation is tight.

The initial state preparation and mixer consist of single-qubit gates only; thus, their costs are negligible.

\subsubsection{Prog-QAOA}

Implementing Prog-QAOA is equivalent to implementing the following function:
\begin{equation*}
A  [\{b_t\}\textrm{ is not a permutation}] + \sum_{t=0}^{n-1}\sum_{\substack{i,j=0\\i\neq j}}^{n-1} W_{i,j}\delta(b_t, i) \delta(b_{t+1}, j).
\end{equation*}
For our purposes, we choose the initial state similar to the one from Sec.~\ref{sec:sim-maxkcut} and Grover mixer~\cite{bartschi2020grover}.

We need $n\lceil \log n\rceil$ qubits to encode the cities. For permutation checking, we need $n+1$ additional qubits, $n$ for counting, and $1$ for $Z$-rotation. These qubits can be reused for objective function implementation, which requires $n\lceil \log n\rceil $ qubits to store the edge-based representation and $n$ qubits for the flags needed while creating the representation. Then we use the same flags for implementing the actual costs. So the total number is $2n\lceil \log n \rceil +n =\Theta(n \log n)$. The effective space size is $\log(n^n) = n\log n$.

Let us start with determining the depth and number of gates in all-to-all architecture. To implement the permutation checking, we need to implement $\order{n^2} $ unitaries for all pairs of quantum registers $(b_t,b_{t'})$ to count the occurrence of each city modulo 2. Each unitary requires a NOT gate controlled by $\sim\log n $ qubits, which can be implemented without using any ancilla in $ \sim\log n $ depth~\cite{saeedi2013linear,da2022linear}. Hence, this step can be done in depth $ \order{n \log n} $. Then we need a NOT gate controlled by $ n $ qubits, which can be implemented in $\order{n}$ depth using $ \order{n^2} $ gates. Uncomputation does not change the complexity.

For the objective part, we need to implement $ \order{n^2} $ unitaries to construct the edge representation through all-to-all strategy. Each unitary requires $ \order{\log n}$ depth and $\order{\log^2 n}$ gates to set the flag, $\order{\log n}$ depth and $\order{\log n}$ gates to implement the controlled copy, hence overall we need $ \order{n \log n} $ depth and $\order{n^2 \log^2n}$ gates. Implementation details can be found in Sec.~\ref{sec:xor} and~\ref{sec:controlled-copy}. The last step which implements the cost requires $ \order{n^2} $ unitaries each with $ \order{\log n} $ depth and $\order{\log^2n}$ gates. Note that these unitaries can be implemented in parallel giving $\order{n\log n}$ depth. We conclude that the overall depth is $\order{n\log n} $, and the number of gates is $\order{ n^2 \log^2 n}$.

In the LNN architecture, we need to use the procedures described in Theorems~\ref{theorem:reginterlace} and \ref{theorem:inter-all-to-all}. This puts an additional cost of swapping the registers which is $\order{\log n}$, which results in depth $ \order{n \log^2 n} $.

The number of parameterized gates is 1 for permutation checking, $\order{n^2}$ for objective part implementation, and $1$ for Grover Mixer using a single-ancilla decomposition from Lemma 7.11 from~\cite{barenco1995elementary}, hence $\order{n^2}$ in total. For the energy span, observe that
\begin{equation}
\sum_{t=0}^{n-1}\sum_{\substack{i,j=0\\i\neq j}}^{n-1} W_{i,j}\delta(b_t, i) \delta(b_{t+1}, j) + A  [\{b_t\}\textrm{ is not a permutation}]  \leq  n\max_{i\neq j}W_{i,j} + A 
= \order{n}.
\end{equation}

Similarly, as it was in Sec.~\ref{sec:sim-maxkcut}, the cost of the initial state and mixer is negligible.

\section{Superposition of permutations in binary encoding}\label{sec:permutations}

Let us recall the original algorithm for generating superposition of permutation in one-hot encoding~\cite{bartschi2019deterministic}. The circuit for generating the superposition of permutations uses $n$ registers, each consisting of $n$ qubits. At each step, one of the registers is fixed so that it complies with the superposition of permutations requirements (the details of the circuit can be found in \cite{bartschi2019deterministic}). The first step fixes the first register, using the second register called the \emph{mask register}. The first step ends with swapping the second and third registers, thus creating space for the \emph{workspace register}.

The next $n-3$ steps fix the consecutive registers as follows. Workspace and mask registers are moved down by $n$ qubits to open up room for the next register to be fixed (so that the new register is between the already fixed registers and the workspace register). Then an appropriate unitary operation is applied so that this new sandwiched register is fixed. 

The final step fixes the workspace and mask registers to obtain the desired final state.

In the case of the binary encoding, the following is changed: at the end of each step, once a register is fixed, a unary-to-binary encoding circuit is applied as in Fig.~\ref{fig_utob}. As we will show later, encoding change can be accomplished with depth $\order{n}$ on LNN architecture. Since the output qubits storing the value in binary encoding are scattered in the register, they are moved to the top of the register leaving $n-\lceil\log n\rceil$ free $\ket{0}$ qubits below. Then, the workspace and mask registers need to be moved down only by $\lceil \log n \rceil $ qubits to create an empty space of $n$ qubits for the next step. Finally, in the end, after the workspace and map registers are fixed, we apply unary-to-binary encoding on both of them and move all the relevant qubits to the top so that the first $n\lceil \log n\rceil$ qubits store the superposition of permutations. Note that since these qubits need to be moved through $\order{n}$ qubits, the overall depth is still $\order{ n}$. Overall, we need $3n + (n-3) \lceil \log n \rceil = \order{n\log n} $ qubits, of which the last $3(n-\log n)$ qubits are set to $\ket{0}$ at the end, and those are the ancilla qubits required for the procedure.

\begin{figure}\centering 
	\includegraphics{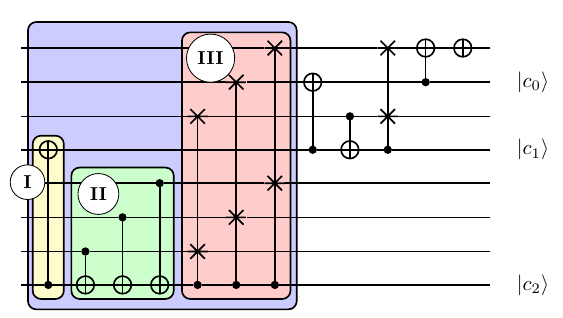}
	\caption{\label{fig_utob}Quantum circuit transforming the number in unary encoding into equivalent binary encoding~\cite{sawaya2020resource}. $\ket{c_0}$, $\ket{c_1}$, $\ket{c_2}$ are the output bits, and the qubits without labels in the output are set to $\ket{0}$. Parts I, II, and III are applied repeatedly for consecutive bits in binary encoding.}
\end{figure}

Let us now show that the procedure presented in Fig.~\ref{fig_utob} can be implemented in $\order{n}$ depth even on the LNN network. Note that the procedure consists of several similar steps applied on smaller and smaller circuits, a single step distinguished with a blue block. Each such step consists of three parts: a single CNOT (yellow) which can be implemented by shifting the target qubit down to the one-by-last qubit, applying CNOT, and then undoing the shift. Then there is a sequence of CNOTs represented by the green block, in which the target qubit should be moved upwards after each CNOT and should be moved back to the bottom at the end. Note that in both cases we have $\leq CN$ gates, where $C$ is a constant (in the next iterations, it will be $CN/2$, $CN/4$, etc.).

The most complicated procedure is the one distinguished by the red block, where we have a sequence of controlled-SWAPs. In there, we start by interlacing all the qubits. Note that after the interlacing, all qubits on which the swaps need to be applied are neighboring. So we only need to move control qubits through the whole register in a similar fashion, as it was done in the circuit described in Sec.~\ref{sec:controlled-copy}. Then the control qubit should be moved to its original position, and the interlacing should be undone. Note that even after these updates, the circuit can be implemented in $\order{ n}$ depth.

Since all the adjustments to the encoding change circuit do not increase the depth at each step, the total depth (and the number of gates) remains the same as in~\cite{bartschi2019deterministic}, namely $\order{n^2}$ depth and $\order{ n^3}$ gates on LNN.

\section{Additional examples} \label{sec:additional-examples}

\subsection{Integer Linear Program}

\paragraph{Problem:} An integer linear program (ILP) consists of an objective function to be minimized and constraints that need to be satisfied. The goal is to find the integer solution $\{x_i\}_{i=1}^m$ that minimizes the objective function. Formally, an ILP is expressed as follows:
\begin{align}
&\text{minimize} &\sum_{i=1}^{m} \sum_ia_i x_i  &&  \\
&\text{subject to}& \sum_{i=0}^m b_i^{(j)} x_i \leq c^{(j)}&& j\in J_1; \\
&& \sum_{i=0}^m b_i^{(j)} x_i = c^{(j)} && j\in J_2;\\
&& 0 \leq x_i < \bar x_i
\end{align}
Here, $a_i^{(j)}, b_i^{(j)}$ and $ c^{(j)}$ specify the problem instance. Note that we do not assume these parameters must be nonzero but we assume that they are small integers, requiring at most $C\log m$ bits for their representation, for some $C\geq 0$. $J_1$ and $J_2$ are disjoint sets of labels s.t. $|J_k|=m_k$. Finally, let $n_j$ for $j\in J_1\cup J_2$ be the number of nonzero parameters $b_i^{(j)}$. 

\paragraph{Solution:}
We take $m$ registers to store variables $x_i$ in binary encoding, each having $ \log \bar x_i$ qubits. Furthermore, we need two additional registers: one for computing $p=b_i^{(j)}x_i$, and another register for computing $\xi=c^{(j)} - \sum_{i=0}^m b_i^{(j)} x_i$. The latter value is computed as in Alg.~\ref{alg:inequality}. Note that for an equality constraint, it is enough to just return the value $\xi$ which is equal to 0 iff the equality is satisfied and nonzero otherwise. The number of qubits required for registers $\xi$ and $p$ should be sufficiently large to store an arbitrarily large product, but in typical scenarios where $c^{(j)}$, $b_i^{(j)}$, $a_i$ and $\bar x_i$ are $\order{\poly(m, |J_1|, |J_2|}$, the number of qubits grows only logarithmically with the size of the data per variable. 

The objective value can be computed similarly. 

\paragraph{Pros and cons:} Traditional methods of transforming the problem into a QUBO~\cite{salehi2022unconstrained} requires introducing a slack variable for each inequality to turn it into an equality, which increases the number of qubits required proportionally to $|J_1|$. Our method requires (under reasonable assumption) only $\order{\polylog(m, |J_1|, |J_2|}$ additional qubits. 

In addition, slack variables add redundant degrees of freedom, as for each inequality there is at most one value of the slack variable that makes the inequality satisfiable. This makes the problem more complicated in general. Since our method does not require explicit optimization of slack variables, we claim it is easier to optimize.

With a typical QUBO formulation, the penalty can be as large as the maximum attained value of $(\sum_{i=0}^m b_i^{(j)} x_i - c^{(j)} + \xi)^2$, which can grow quadratically with the number of variables times the maximum possible product $b_i^{(j)} x_i$. In our case, the largest energy is limited by $|\sum_{i=0}^m b_i^{(j)} x_i - c^{(j)}|$, which gives a quadratic improvement in the number of shots required for energy estimation. Furthermore, one may choose even a more extreme penalization with $[\xi < 0]$ for inequalities, which results at most $\order{1}$ up to a rescaling factor. However, in the latter case, a particularly large penalty value might be required in order to make the penalty constant meaningful w.r.t. to the objective function.

Note that while our method is very efficient in the number of qubits, it is particularly costly in depth, as up to poly-logarithmic factors it grows proportionally with the number of non-zero $b_{i}^{(j)}$. This issue can be overcome by adding more $p$-like and $\xi$-like registers and parallelizing $\xi$ computation. This in general depends on the sparsity and form of the inequalities, and optimal design of such parallelization goes beyond the scope of this paper. We would like to note that some possible improvements for the problem can be found in~\cite{de2019knapsack,gilliam2021grover} as well, who contributed with a similar approach to the introduced one.

\begin{algorithm}[h]
	\caption{Program for computing validating linear inequality .}
	\label{alg:inequality}
	\begin{algorithmic}[]
		\REQUIRE $(b_1^{(j)},\dots,b_m^{(j)}, c^{(j)})$ -- coefficients for linear inequality, $(x_1,\dots,x_m)$ -- variables from the inequality
        \STATE $\xi \gets c^{(j)}$
		\FOR{$i=1$ to $m$} 
        \STATE $p \gets b_i^{(j)}x_i $ \hfill // This has to be done only for $b_i^{(j)}\neq 0$
        \
		\STATE $ \xi \gets \xi - p$
        \STATE \textbf{uncompute} $p \gets b_i^{(j)}x_i $
		\ENDFOR
        \IF{ $\xi < 0$}
            \STATE $\xi' \gets |\xi|$ \hfill // Can be replaced with e.g. $\xi^2$
        \ELSE
            \STATE $\xi' \gets 0$
        \ENDIF
		\RETURN $\xi'$
	\end{algorithmic}
\end{algorithm}

\subsection{Travelling Salesman Problem with Time Windows}
\paragraph{Problem:} The Travelling Salesman Problem with Time Windows (TSPTW) is a generalization of the original TSP, in which each city must be visited within a given time window. Besides the usual travel cost $W_{uv}$ between the cities which can be interpreted as the time it takes to move from city $u$ to city $v$, each city $v$ is associated with an earliest start time $e_v$ and a latest due time $l_v$. If the salesman arrives at city $v$ before $e_v$, he has to wait. TSPTW aims to find a permutation of the cities that minimizes the cost and satisfies the time window constraints. 

\paragraph{Solution}
The objective function and permutation constraint are implemented as for the usual TSP. The only addition we need is a program, that checks whether the time windows constraints are satisfied for each city, for which we present a possible solution in Alg.~\ref{alg:tsptw}. First, we change the representation from time-to-city to city-to-city, which can be done identically to the transformation in the objective function, and store the next visited city within variables $\{\text{edge}_i\}$. This time, we also keep track of the visiting order of the cities within variables $\{\text{ord}_i\}$ (city-to-time formulation). Using the $\text{edge}_i$ variables, we construct a set of variables $\textrm{cost}_i$ that holds the cost between the cities $i$ and $\text{edge}_i$. Furthermore, we create an additional set of variables $\textrm{ord\_cost}_i$ to store the ordered list of the costs with respect to time i.e. to store the cost of the edge traversed at time $i$. For this procedure, we need $4n$ additional new registers, where the registers in the first two sets need $ \log{n}$ qubits, while the registers from the other two sets need to be able to represent numbers up to arbitrary precision since they contain the cost of moving between the cities. Having the time-ordered list of costs in $\{\textrm{ord\_cost}_i\}$, we can calculate the arrival times for each city. If the arrival is before the earliest start time of the city, then the waiting time is incorporated by setting the time as the maximum of the earliest start time and the arrival time. This is stored in $\{\text{time}_t\}$ and gives the actual time (also called the service time) the city at position $t$ is visited. This time is used for checking if the latest due time is violated for a particular city. In the case the latest due time constraint is violated, we can either just yield $1$ each time this happens, or we can yield the difference between the latest start time and the service time of the corresponding city, as a measure of the severity of the violation. For the latter, we need additional $n+2$ registers, that can store arbitrary precision.

\paragraph{Pros and cons:}

The time-to-city representation requires $n\log{n}$ qubits. For the representation change (variables edge$_i$ and ord$_i$), we need $2 n \log{n}$ additional qubits. Then, to construct the time-ordered list of costs, we need further $2 n$ registers, all of which can express numbers to required precision using floating-point arithmetics. Normally that means that $\order{\log n}$ per register is required for variables cost$_i$ and ord\_cost$_t$. Note that $l_u =  \order{n}$ under this assumption so we only need additional $\order{n \log n}$ qubits for each variable tim$_t$, and for new\_time. Finally, we need at least one qubit for the flag. With these assumptions, the number of qubits is $\order{n \log n}$ in total. Similarly to the basic TSP, the corresponding quantum circuit requires only $\tildorder{n^2}$ gates to implement. Note, that this is better than all the known encodings of even the basic TSP shown in Tab.~\ref{table:tsp}. To the best of our knowledge, the most efficient formulations of this problem so far are presented in \cite{salehi2022unconstrained}. In this work, the authors present several formulations of this problem. The edge-based formulation uses $\order{n^3}$ qubits and $\tildorder{n^5}$ gates. The node-based formulation uses $\order{n^2}$ qubits and $\tildorder{n^5}$ gates. Finally, the ILP based formulation require $\order{n^2}$ qubits and $\order{n^3}$ gates.

\begin{algorithm}[t]
	\caption{Program for verifying the time-windows constraints in TSPTW}
	\label{alg:tsptw}
	\begin{algorithmic}[]
		\REQUIRE $(c_0,\dots,c_{n-1})$ -- the list of integers denoting visited cities, ${W}$ -- set of arcs, $e_v$, $l_v$ -- set of earliest start times and latest due times
            \STATE $\textrm{flag} \gets 0 $
		\FOR{$i=0$ to $n-1$} 
		\FOR{$t=0$ to $n-1$} 
		\STATE $\textrm{flag} \gets c_{t}= i$
		\IF{$\textrm{flag} = 1$} 
		\STATE $\textrm{edge}_i \gets c_{t+1}$ 
            \STATE $\textrm{ord}_i \gets t$
		\ENDIF 
        \STATE \textbf{uncompute} $\textrm{flag} \gets c_{t}= i$
		\ENDFOR
		\ENDFOR
  
            \FOR{$i=0$ to $n-1$} 
		\FOR{$j=0$ to $n-1$} 
		\IF{$\textrm{edge}_i = j$} 
		\STATE $\textrm{cost}_i \gets {W}_{ij}$ 
		\ENDIF 
		\ENDFOR
		\ENDFOR 

            \FOR{$t=0$ to $n-1$} 
		\FOR{$i=0$ to $n-1$} 
		\IF{$\textrm{ord}_i = t$} 
		\STATE {$\textrm{ord\_cost}_t \gets \textrm{cost}_i$}
		\ENDIF 
		\ENDFOR
		\ENDFOR 
  
            \STATE $\textrm{new\_time} \gets 0 $
            \STATE $\textrm{$\textrm{time}_0$} \gets 0 $
		\FOR{$t=0$ to $n-1$} 
            \STATE $\textrm{$\textrm{time}_{t+1}$} \gets 0 $
		\FOR{$i=0$ to $n-1$} 
		\IF{$ \textrm{\rm ord}_{i} = (t+1)\% n$}
            \STATE $ \textrm{new\_time}  \gets  \textrm{$\textrm{time}_{t}$} + \textrm{ord\_cost}_t$
            \STATE $ \textrm{$\textrm{time}_{t+1}$}  \gets  \max{(\textrm{new\_time}, e_i)}$
            \STATE \textbf{uncompute} {$ \textrm{new\_time}  \gets  \textrm{$\textrm{time}_{t}$} + \textrm{ord\_cost}_t$}
            \STATE $ \textrm{flag}  \gets  \textrm{$\textrm{time}_{t+1}$} \geq l_i$
            \STATE \textbf{yield} flag \hfill //And apply $Z$ rotation
            \STATE \textbf{uncompute} {flag $\gets$ $\textrm{time}_{t+1}$ $\geq l_i$}
            \ENDIF
            \ENDFOR
		\ENDFOR 
		
	\end{algorithmic}
\end{algorithm}

\subsection{Special cases of general-metric unit disk graph isomorphism}
\paragraph{Problem:} 
Let $V$ be a finite set of points from a metric space $(A,d)$. Suppose graph $G=(V,E)$ is constructed following
\begin{equation}
    \{v,w\} \in E \iff d(v,w) \leq 1.
\end{equation}
We will call such graphs general-metric unit disk graphs. Given two such graphs $G_1=(V_1 ,E_1)$ and $G_2=(V_2,E_2)$ with $n\coloneqq |V_1|=|V_2|$, the goal is to find an isomorphism  $\varphi:V_1 \to V_2$, i.e. a bijective function between $V_1$ and $V_2$ s.t.
\begin{equation}
    \{v_1, v_2\} \in E_1 \iff \{\varphi(v_1), \varphi(v_2)\} \in E_2. \label{eq:graph-isomorphism-constraint}
\end{equation}
General graph isomorphism is known to be an NP problem. However, it is not known if it is P, NP-Hard, or NP-intermediate. Some variants of the problem, like fixed-dimensional Euclidean metric can be shown to lie in P~\cite{arvind2014complexity}.

Since the problem we consider is a special case of 
general graph isomorphism, one could use the implicit representation through only the set of points $V_1,V_2\subset A$ instead of an explicit representation of graphs $(V_1,E_1)$ and $(V_2,E_2)$. In this section, we assume that the implicit representation is more efficient, i.e. elements of $A$ can be sparsely written and the metric $d$ is efficient to compute.  This is satisfied for many common metrics defined over linear spaces, like Euclidean or Manhattan metrics. To be more explicit, our complexity analysis will hold iff the number of qubits required to represent an element of $A$ requires only $\polylog(n)$ qubits and $d(x,y)$ can be computed using $\polylog(n)$ gates. 

\paragraph{Solution:} We require $n$ registers $\varphi_v$ for $v\in V_1$ each having $\sim \log n$ qubits to store permutation $\varphi$. We will use binary encoding to store the permutation. Verification of whether $\varphi$ is a bijection can be done with Algorithm~\ref{alg:permutation}. Next, we need to check whether it satisfies Eq.~\eqref{eq:graph-isomorphism-constraint}. The program for this verification is presented in Algorithm~\ref{alg:graph-iso} and goes as follows. First, in separate registers $\textrm{coord}_v$, we store a representation of $\varphi(v) \in A$, for instance for 2-dimensional space, it will be $\textrm{coord}_v = (x,y)$. Then for each of the 2-combinations of $V_1$, we compute the distance between the points of the corresponding nodes in $V_2$ and set the flag to 1 if the nodes are connected according to the metric and 0 otherwise. If $\{v_1,v_2\}$ are in $E_1$, then the value of the flag is negated so that it will give 0 if both graphs have or do not have the corresponding edge, and 1 otherwise. Finally, the value of the flag is uncomputed. Note that the check $\{u,v\} \in E_1$ is performed on the classical computer, and does not require from $G_1$ to be represented through the implicit form of the set $V\subset A$.

We need additional registers $\{\textrm{coord}_v\}_{v \in V_1}$ to store the coordinate of the corresponding vertex $b(v) \in V_2$. For instance, for $k=2$ dimensional space, $\textrm{coord}_v = (x_{b(v)},y_{b(v)})$. This can be done in parallel for each pair of registers $(b_v, \textrm{coord}_v)$. Then for each of the 2-combinations $(u,v)$ of $V_1$, we compute the distance between the corresponding nodes in $V_2$ and set the flag to 1 if the nodes are connected according to the metric and 0 otherwise. If $\{u,v\} \in E_1$, then the value of the flag is negated so that it will give 0 if both graphs have or do not have the corresponding edge, and 1 otherwise. Finally, the value of the flag is uncomputed.

\paragraph{Pros and cons:}
The method requires only $\tildorder{n}$ qubits, as $b_i$ registers take all-together $\sim n \log n$ qubits, $\textrm{coord}_v$ takes $\tildorder{k\log n}$ qubits, dist takes $\order{\log n}$ qubits (similarly addition and multiplication may require $\order{\polylog(n)}$ auxiliary qubits and finally flag takes one qubit. The corresponding quantum circuit requires $\tildorder{n^2}$ gates. Note that the first loop runs in parallel for all pairs of registers $(b_v, \textrm{coord}_v)$ thus the corresponding circuit has depth $\tildorder{n}$. For the next loop, using round-robing tournament scheduling one can also achieve $\tildorder{n}$ depth, which gives total depth $\tildorder{n}$. All encodings known to the authors requires $\order{n^2}$ qubits and have $\order{n^3}$ terms~\cite{lucas2014ising, chatterjee2021variational, calude2017qubo, zick2015experimental} in general case scenario. Note that in~\cite{zick2015experimental} an attempt to a significant reduction in the number of variables is made, however; it is unclear how it will work for the general Euclidean case.

\begin{algorithm}[t]
	\caption{General-metric unit-disk graph isomorphism cost.}\label{alg:graph-iso}
	\begin{algorithmic}[]
		\REQUIRE $\varphi_i$ -- register for storing the value $\varphi(i)$; $\{p_w: w \in V_2\}$ classically known coordinates of vertices from $V_2$ \\[1ex]
		\FOR {$v \in V_1$}
		\FOR {$w\in V_2$} 
		\IF {$\varphi_v = w$}
		\STATE $\textrm{coord}_v \gets p_w$ 
		\ENDIF
		\ENDFOR
		\ENDFOR 
		\FOR{$(u, v) \in \textrm{ 2-combinations of }V_1$}
		\STATE $\textrm{dist} \gets d(\textrm{coord}_{u}, \textrm{coord}_{v}) $ \hfill //distance between $b_{u}$ and $b_{v}$
		\STATE $\textrm{flag} \gets \textrm{dist} \leq 1$
		\IF {$\{u,v\} \in E_1$}
            \STATE $\textrm{flag} \gets  \neg \textrm{flag}$
		\ENDIF
        \STATE \textbf{yield} \textrm{flag}
        \IF {$\{u,v\} \in E_1$}
            \STATE \textbf{uncompute} $\textrm{flag} \gets  \neg \textrm{flag}$
        \ENDIF
        \STATE \textbf{uncompute} $\textrm{flag} \gets \textrm{dist} \leq 1$
		\STATE \textbf{uncompute} $\textrm{dist} \gets d(\textrm{coord}_{u}, \textrm{coord}_{v}) $ 
		\ENDFOR 				
	\end{algorithmic}
\end{algorithm}

\section{Numerical experiments}

\subsection{Quantum circuit cost estimation for Max-$K$-Cut}\label{sec:maxkcut-numerics}

For the results presented in Fig.~\ref{fig:maxkcut-numerics}, we implemented each considered QAOA using the \texttt{qiskit} package. We made two versions of each algorithm, one for all-to-all architecture and another for LNN. 
We used swap networks like register swap, interlacing, and intra-all-to-all strategies as described in Appendix Sec.~\ref{sec:swap-networks}.
To implement the multi-controlled NOT operator, we considered two approaches. For all-to-all architecture, we used the v-chain implementation of qiskit, which uses $l-1$ ancillas, where $l$ is the number of control qubits. For LNN, we used the no-ancilla implementation presented in \cite{da2022linear} with the LNN-friendly modification taken from \cite{saeedi2013linear}. Once created, the circuit was transpiled into a circuit consisting of only general 1-qubit gates and CNOT gates. In the case of LNN architecture, the transpilation included information about the restricted qubit connectivity, however, our implementation of the quantum circuit was already adjusted to LNN prior to transpilation. For the transpilation, we used optimization level $3$. The estimations presented include the cost of preparing the initial state and a single level of QAOA. The number of gates and depth include both 1- and 2-qubit gates. 

We considered the Max-$K$-Cut problem for a complete weighted graph with $n=128$ nodes and $K=3,4,\dots,63$ colors. Note that quality metrics in our experiment depend only on the chosen QAOA variant, the number of colors, and the number of nodes and do not depend on the weights.

\subsection{TSP optimization}\label{sec:tsp-numeric}
All algorithms were implemented in the Julia programming language. For each QAOA variant and each number of cities, we considered 50 TSP instances defined on a directed complete graph with the elements of cost matrix sampled i.i.d. uniformly from $\{1,\dots,10\}$. Penalty parameters for the TSP over $n$ cities were chosen to be $2n\max_{i\neq j}W_{i,j}$ for Prog-QAOA and $2\max_{i\neq j} W_{i,j}$ for other algorithms. 

We used the L-BFGS optimizer implemented in Optim.jl. For the first 5 levels, for each TSP instance, 5 optimization runs were considered, starting each time with a random sequence of angles drawn, each initial vector sampled from the uniform distribution $[0,\pi]$ for the mixer angles and $[0,2\pi]$ for the objective Hamiltonian angles. The optimization was run on a hyperrectangle, where the length of each dimension is determined by the domain of the corresponding angle with periodic boundary conditions. This is effective because the energy function in terms of angles is periodic.

For $k>5$ levels, the optimized vector from the previous step was used as the initial vector, enlarged with random angles for the newly added $k$-th level drawn from the same distribution as it was for the initial vector for $k\leq 5$. During the optimization, the whole vector is optimized, not only the two newly added angles. The stopping conditions were chosen to be a change of $10^{-5}$ of the optimized vector \texttt{x\_tol}, gradient \texttt{g\_tol} or the change of the objective value \texttt{f\_tol}. We run the experiment 5 times per each QAOA variant and each TSP instance, resulting in 5 trajectories of energies. For each level, only the best result was chosen for visualization, independently for the probabilities and the rescaled energy. The area presented in the figure reflects the samples within the standard deviation from the mean. For the second and the third rows, the areas were omitted to improve readability.

We would like to mention that XY- and Prog-QAOA with cyclic mixer on each register and Prog-QAOA with $X$ mixer were also analyzed in our experiments. However, the results were clearly worse and thus were omitted. We were not able to consider ILP-originating QAOA in general and $X$-QAOA for 5 and 6 cities due to the unattainable qubit requirements.

\end{document}